\theoremstyle{plain}
\newtheorem{thm}{Theorem$\!$}
\newtheorem{clm}[thm]{Claim$\!$}
\newtheorem{lem}[thm]{Lemma$\!$}
\newtheorem{prop}[thm]{Proposition$\!$}
\newtheorem{cor}[thm]{Corollary$\!$}
\newtheorem{defn}[thm]{Definition$\!$}
\newtheorem{xmpl}{Example$\!$}
\newtheorem{cnstr}{Construction$\!$}
\newcounter{enumrom}
\renewcommand{\theenumrom}{(\roman{enumrom})}
\renewcommand{\@endtheorem}{\endtrivlist}
\renewcommand{\thefigure}{{\@arabic\c@figure}}
\renewcommand{\fnum@figure}{{\bf Figure\,\thefigure}}
\newcommand{\cC}{{\cal C}}
\newcommand{\cD}{{\cal D}}
\DeclareMathOperator{\spun}{span}
\begin{document}


\title{\textbf{Explicit MDS Codes for Optimal Repair Bandwidth}
}

\author{\IEEEauthorblockN{Zhiying Wang, Itzhak Tamo, and Jehoshua Bruck\\}
}


\maketitle

\begin{abstract}
MDS codes are erasure-correcting codes that can correct the maximum number of erasures  for a given number of redundancy or parity symbols. If an MDS code has $r$ parities and no more than $r$ erasures occur, then by transmitting all the remaining data in the code, the original information can be recovered. However, it was shown that in order to recover a single symbol erasure, only a fraction of $1/r$ of the information needs to be transmitted. This fraction is called the \emph{repair bandwidth (fraction)}. Explicit code constructions were given in previous works. If we view each symbol in the code as a vector or a column over some field, then the code forms a 2D array and such codes are especially widely used in storage systems. In this paper, we address the following question: given the length of the column $l$, number of parities $r$, can we construct high-rate MDS array codes with optimal repair bandwidth of $1/r$, whose code length is as long as possible?
In this paper, we give code constructions such that the code length is $(r+1)\log_r l$.
\end{abstract}



\let\thefootnote\relax\footnotetext{Z. Wang is with Department of Electrical Engineering, Stanford University, Stanford, CA 94305, USA (email: zhiyingw@stanford.edu).

I. Tamo is with Department of Electrical and Computer Engineering and Institute for Systems Research, University of Maryland, College Park, MD 20742, USA (email: tamo@umd.edu).

J. Bruck is with Department of Electrical Engineering, California Institute of Technology,
Pasadena, CA 91125, USA (email: bruck@caltech.edu).
}

%
%
\section{Introduction}
%
%
MDS (maximum distance separable) codes are optimal error-correcting codes in the sense that they have the largest minimum distance for a given number of parity symbols. If each symbol is a vector or a column, we call such a code an MDS array code (e.g. \cite{Blaum, B-code, x-code, RDP-code,Plank}). In (distributed) storage systems, each column is usually stored in a different disk, and MDS array codes are widely used to protect data against erasures due to their error correction ability and low computational complexity. In this paper, we call each symbol a column or a node, and the column length, or the vector size of a symbol, is denoted by $l$.

If an MDS code has $r$ parities, then it can correct up to $r$ erasures of entire columns. In this paper, we not only would like to recover the erasures, but also care about the efficiency in recovery: what is the fraction of the remaining data transmitted in order to correct the erasures?  We call this fraction the \emph{repair bandwidth (fraction)}. For example, if $r$ erasures happen, it is obvious that we have to transmit all of the remaining information, therefore, the fraction is $1$. For a single erasure it was shown in \cite{Dimakis} (which also formulated the repair problem) that this fraction is actually lower bounded by $1/r$. In the general case, it was shown in \cite{Tamo2} that
when $e \le r$ nodes are erased, then the repair bandwidth is lower bounded by $e/r$.
Since the repair of information is much more crucial than redundancy, and we study mainly high-rate codes, we will focus on the optimal repair of information or systematic nodes. Moreover, since single erasure is the most common scenario in practice, we assume $e=1$. Thus, in this paper a code is said to have an optimal repair if this bound of $1/r$ is achieved for the repair of \emph{any} of its systematic nodes.
For example, in Figure \ref{fig1}, we show an MDS code with $4$ systematic nodes, $r=2$ parity nodes, and column length $l=2$. One can check that this code can correct any two erasures, therefore it is an MDS code. In order to repair any systematic node, only $1/r=1/2$ fraction of the remaining information is transmitted. Thus this code has optimal repair.

\begin{figure*}
 \centering
 \begin{tabular}{|c|c|c|c|c|c|}
   \hline
   N1 & N2 & N3 & N4 & N5 & N6 \\
   \hline
   $a$ & $b$ & $c$ & $d$ & $a+b+c+d$ & $2a+w+2b+3c+d$ \\
   \hline
   $w$ & $x$ & $y$ & $z$ & $w+x+y+z$ & $3w+b+3x+2y+z$ \\
   \hline
 \end{tabular}
    \caption{(n=6,k=4,l=2) MDS code over finite field $\mathbb{F}_4$, and we use $\{0,1,2,3\}$ to represent its elements. The first $4$ nodes are systematic and the last $2$ are parities. To repair $N1$ transmit the first row from every remaining node. To repair $N2$ transmit the second row. To repair $N3$ transmit the sum of both rows. And to repair $N4$ transmit the sum of the first row and $2$ times the second row from nodes $N1,N2,N3,N5$, and the sum of the first row and $3$ times the second row from node $N6$.}\label{fig1}
\end{figure*}

In \cite{Dimakis-interference-alignment, Rashmi, Suh-alignment, Kumar2009, Wu07} codes achieving the repair bandwidth lower bound were studied where the number of systematic nodes is less than the number of parity nodes (low code rate). For arbitrary code rate, \cite{Cadambe2013} proved that the lower bound is asymptotically achievable when the column length $l$ goes to infinity. And \cite{viveck, viveck3, Papailiopoulos, Wang, Tamo2} studied codes with more systematic nodes than parity nodes (high code rate) and finite $l$, and achieved the lower bound of the repair bandwidth. If we are interested in the \emph{code length} $k$, i.e., the number of systematic nodes given $l$, low-rate codes have a linear code length $l+1$ \cite{Kumar2009,Suh-alignment}; on the other hand, high-rate constructions are relatively short. For example, suppose that we have 2 parity nodes, then the number of systematic nodes is only $\log_2 l$ in all of the constructions, except for \cite{viveck3} it is $2\log_2 l$. In \cite{Tamo3} it is shown that an upper bound for the code length is $k \le 1+l \binom{l}{l/2}$, and the bound is further tightened to $k \le 2 (\log_2 l) (\log_2 l + 1)+1$ in \cite{Goparaju}. But the tightness of the above bounds is not known. It is obvious that there is a gap between this upper bound and the constructed codes.

Besides bandwidth which corresponds to transmission incurred during repair, we are also interested in \emph{access}. It is defined as the fraction of data read in the surviving nodes in order to repair an erasure. Access is an important metric because it affects the disk I/O operations and hence the speed and complexity in repair. Since a transmitted symbol can be functions of many read symbols, we know that access is no less than $1/r$. For example, in Figure \ref{fig1} the repair of node $N1$ reads and transmits only the first row, so the repair bandwidth and access are both $1/2$. However, the repair of node $N3$ requires reading both rows, so the access is $1$. Moreover, we define \emph{update} as the number of necessary writes if a symbol is rewritten in the code. This metric is important when blocks of the stored data is frequently updated. In Figure \ref{fig1} symbol $a$ appears 3 times in the code and therefore its update is 3, while symbol $w$ has update 4. For an MDS code with $r$ parities, it is not difficult to see that the update should be no less than $r+1$ for each symbol. And we say that a code achieving this bound is optimal update.

The main contribution of this paper is as follows: 
\begin{enumerate}
\item
We construct high-rate codes with $r$ parity nodes and $(r+1) \log_r l$ systematic nodes. In particular, with $2$ parity nodes we get a code length of $3 \log_2 l$, moreover, this code uses a finite field of size $1+2\log_2 l$. 
\item
We rigorously state some sufficient properties of linear optimal repair codes (similar results also seen in \cite{Suh-alignment, Cadambe2013,Kumar2009}), and thus enable explicit code construction and simplify proofs of optimality. 
\item
We design optimal-update codes with $2$ parities and $2 \log_2 l$ systematic nodes. This construction exceeds the upper bound of $k \le \log_2 l$ given by \cite{Tamo3} for optimal-update and diagonal encoding matrices. Diagonal encoding matrices means that the encoding are done only within each row in the array code. However our construction allows mixing of different rows in encoding. As a result, we can see a fundamental difference between these two types of codes.
\item
We construct a family of codes that further reduces the access compared to the proposed optimal-bandwidth code. We use a technique that transforms a linear code to an equivalent one through block-diagonal matrix. This technique can be applied to an arbitrary optimal-bandwidth code and therefore can be a useful tool for future codes as well.
\end{enumerate}

Even though our construction with  $(r+1) \log_r l$ systematic nodes is additive improvement for code length compared to \cite{viveck3}, where the code length is $r \log_r l$, we point out here a few advantages of our work. Through the sufficient properties of optimal repair codes, we are then able to explicitly write the code generating matrix in terms of eigenspaces and eigenvalues, whereas \cite{viveck3} constructed codes recursively by Kronecker product of matrices and multiplication of permutation matrices. Moreover, our technique eigenspaces inspired new code constructions in recent work \cite{Li}. Also in \cite{viveck3} the code requires a large enough finite field. But in our construction the finite field size is specified for the $2$ parity case, and therefore can be practical for distributed storage applications.

The rest of the paper is organized as follows: in Section \ref{sec2} we will formally introduce the repair bandwidth and the code length problem. In Section \ref{sec3} codes with $r$ parity nodes are constructed, and we show that the code length is $(r+1) \log_r l$. We will show an optimal-update code with $2\log_2 l$ systematic nodes and 2 parity nodes in Section \ref{sec7}, and discuss about reducing the access ratio in Section \ref{sec6}. Finally we conclude in Section \ref{sec5}.

%
%
\section{Problem Settings} \label{sec2}
%
%

We define in this section the array code by specifying the encoding, repair, and reconstruction processes.
 
\subsection{Encoding}
An $(n,k,l)$ MDS array code is an $(n-k)$-erasure-correcting code such that each symbol is a column of length $l$. The number of systematic symbols is $k$ and the number of parity symbols is $r=n-k$. We call each symbol a column or a node, and $k$ the \emph{code length}. We assume that the code is systematic, hence the first $k$ nodes of the code are information or systematic nodes,  and the last $r$ nodes are parity or redundancy nodes.

Suppose the columns of the code are $C_1,C_2,\dots,C_{n}$, each being a column vector in $\mathbb{F}^l$, for some finite field $\mathbb{F}$. We assume that the parity nodes are a linear function of the information nodes. Namely, for $i=1,...,r$, parity node $k+i$ is defined by the invertible \emph{encoding matrices} of size $l$ $A_{i,j}$, $j=1,...,k$ as follows
$$C_{k+i}=\sum_{j=1}^k A_{i,j}C_{j}.$$
For example, in Figure \ref{fig1}, the encoding matrices are $A_{1,j}=I$ for all $j=1,..,4$, and
$$A_{2,1}=\left(
    \begin{array}{cc}
      t & 1 \\
      0 & t+1 \\
    \end{array}
  \right),A_{2,2}=
  \left(
    \begin{array}{cc}
      t & 0 \\
      1 & t+1 \\
    \end{array}
  \right),
  A_{2,3}=\left(
    \begin{array}{cc}
      t+1 & 0 \\
      0 & t \\
    \end{array}
  \right),
  A_{2,4}=\left(
    \begin{array}{cc}
      1 & 0 \\
      0 & 1 \\
    \end{array}
  \right).
$$
Here the finite field is $\mathbb{F}_4$ generated by the irreducible polynomial $t^2+t+1$, and in the table $t, t+1$ are written as $2,3$, respectively.
In our constructions, we require that $A_{1,j}=I$ for all $j \in [k]$. Hence the first parity is the row sum of the information array. Even though this assumption is not necessarily true for an arbitrary linear MDS array code, it can be shown that any linear code can be equivalently transformed into one with such encoding matrices \cite{Tamo3}.

\subsection{Repair}
Suppose a code has optimal repair for any systematic node $i$, $i \in [k]$, meaning only a fraction of $1/r$ data is transmitted in order to repair a node erasure. When a systematic node $i$ is erased, we are going to use size $l/r \times l$ matrices $S_{i,j}$, $j \neq i, j \in [n]$, to repair the node: From a surviving node $j$, we are going to compute and transmit $S_{i,j} C_j$, which is only $1/r$ of the information in this node. 

It was shown in \cite{Tamo3} that we can further simplify our repair strategy of node $i$ and assume by equivalent transformation of the encoding matrices that
\begin{equation}\label{eq21}
  S_{i,j}=S_i, \textrm{ for all } j \neq i, j \in [n].
\end{equation}
\textbf{Notation:} By abuse of notations, we write $S_i, S_i A_{t,j}$ both to denote both the matrices of size $l/r \times l$ and the subspaces spanned by their rows.

In the following we show necessary and sufficient conditions for optimal repair.
\begin{clm} \cite{Tamo3}
Optimal repair of a systematic node $i$ is equivalent to the following \textbf{subspace property}: 
There exist a matrix $S_{i}$ of size $l/r \times l$, such that for all $j \neq i, j \in [k], t \in [r]$,
\begin{align} 
S_i &= S_i A_{t,j}, \label{eq8}\\
\sum_{t=1}^{r} S_i A_{t,i} &= \mathbb{F}^l  \label{eq9}
\end{align}
\end{clm}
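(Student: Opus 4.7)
The plan is to prove both directions by tracking exactly which vectors the replacement node can compute and when those vectors determine $C_i$ uniquely. By \eqref{eq21} I may assume from the outset that every surviving node $j$ transmits $S_i C_j$.

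For the forward direction, suppose optimal repair holds with the uniform matrix $S_i$ of size $l/r \times l$. From each surviving systematic node $j \neq i$ the replacement node directly receives $S_i C_j$, and from each parity node $k+t$ it receives
\begin{equation*}
S_i C_{k+t} \;=\; S_i A_{t,i} C_i \;+\; \sum_{m \in [k] \setminus \{i\}} S_i A_{t,m} C_m.
\end{equation*}
The only way to peel off the cross terms in $C_m$ (for $m \neq i$) from knowledge of $S_i C_m$ alone is for the row space of $S_i A_{t,m}$ to lie inside the row space of $S_i$; since $A_{t,m}$ is invertible and $S_i$ has $l/r$ rows, both subspaces have the same dimension and must coincide, which is \eqref{eq8}. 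After the cancellation, what remains from the $r$ parity nodes are the vectors $S_i A_{t,i} C_i$ for $t = 1, \dots, r$, a total of $r \cdot (l/r) = l$ coordinates about $C_i$. Uniqueness of the recovered $C_i$ for every choice of stored data is equivalent to the map $x \mapsto (S_i A_{t,i} x)_{t=1}^{r}$ being injective on $\mathbb{F}^l$, i.e., to $\sum_{t=1}^{r} S_i A_{t,i} = \mathbb{F}^l$, which is \eqref{eq9}.

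For the converse, assume \eqref{eq8} and \eqref{eq9}. Condition \eqref{eq8} guarantees, for each $t$ and each $m \neq i$, the existence of a matrix $P_{t,m}$ with $S_i A_{t,m} = P_{t,m} S_i$; hence the vectors $S_i A_{t,m} C_m = P_{t,m}(S_i C_m)$ can be computed from the transmissions already received from the systematic survivors. Subtracting them from $S_i C_{k+t}$ produces $S_i A_{t,i} C_i$ for every $t \in [r]$, and \eqref{eq9} then lets us invert and recover $C_i$. The total transmission is $(n-1) \cdot l/r$ symbols, achieving the $1/r$ bandwidth fraction.

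The step I expect to be most delicate is justifying, in the forward direction, that we may restrict attention to a linear, uniform, full-row-rank $S_i$ so that the dimension count in \eqref{eq8} collapses inclusion to equality and no more exotic repair rule can evade the two subspace conditions. The reduction to linearity and to $S_{i,j} = S_i$ is imported from \cite{Tamo3} and \eqref{eq21}, respectively; once these are in place the rank bookkeeping above is routine, so the bulk of the argument is really the accounting of which linear combinations survive cancellation from the parity reads.
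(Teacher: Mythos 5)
Your argument is correct and follows essentially the same route as the paper's own sketch: cancellation of the interference terms forces \eqref{eq8} (with the full-rank dimension count upgrading inclusion of row spaces to equality), and injectivity of $x \mapsto (S_i A_{t,i}x)_{t=1}^r$ forces \eqref{eq9}, while the converse is the same direct substitution. The delicate necessity step you flag (ruling out a more exotic decoder) is left at the same level of informality in the paper, which likewise gives only a sketch and defers the complete proof to \cite{Tamo3}.
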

Here the equalities are defined on the row spans instead of the matrices, and the sum of two subspaces $A,B$ is defines as $A+B=\{a+b: a \in A, b \in B\}$. Obviously, in \eqref{eq9} the dimension of each subspace $S_{i}A_{t,i}$ is no more than $l/r$, and the sum of $r$ such subspaces has dimension no more than $l$. This means that these subspaces intersect only on the zero vector. Therefore, the sum is actually a direct sum of the subspaces, and matrix $S_{i}$ has full rank $l/r$.

\begin{IEEEproof}[Sketch of proof:]
Suppose the code has optimal repair bandwidth, then we need to transmit $l/r$ elements from each surviving column. Suppose we transmit $S_{i}C_j$ from a systematic node $j \neq i, j \in [k]$, and $S_{i}C_{k+t} = \sum_{z=1}^{k} S_{i}A_{t,z}C_z$ from a parity node $k+t \in [k+1,k+r]$. Our goal is to recover $C_i$ and cancel out all $C_j$, $j \neq i, j\in[k]$.
In order to cancel out $C_j$, \eqref{eq8} must be satisfied. In order to solve $C_i$, all equations related to $C_i$ must have full rank $l$, so \eqref{eq9} is satisfied. One the other hand, if \eqref{eq8} \eqref{eq9} are satisfied, one can transmit $S_{i}C_j$ from each node $j$, $j \neq i, j \in [n]$ and optimally repair the node $i$. 
\end{IEEEproof}
Similar interference alignment technique was first introduced in \cite{Cadambe2013} for the repair problem. Also, \cite{Kumar2009} was the first to formally prove similar conditions. However, the reduction from distinct $S_{i,j}$ to identical $S_i$ for different values of $j$ was not known before.

Notice that if \eqref{eq8} is satisfied then $S_i$ is an invariant subspace of $A_{t,j}$ for any $t=1,...,r$ and $j\neq i$. If $A_{t,j}$ is diagonalizable then it is uniquely defined by its eigenspaces and eigenvalues. Moreover each of the invariant subspaces of $A_{t,j}$ has a basis composed of eigenvectors of $A_{t,j}$. Therefore, we will first focus on finding the proper encoding matrices, by defining their set of eigenspaces. These eigenspaces will uniquely define the set of invariant subspaces for each encoding matrix.  Then we will choose carefully the eigenvalue that corresponds to each eigenspace, in order to ensure the MDS property of the code.

For a general repair strategy, the subspaces $S_{i,j}, j \in [k]$ are not necessarily identical, and the general subspace property for optimal repair of a systematic node $i$ is: There exist matrices $S_{i,j}$, $j \neq i, j \in [n]$, all with size $l/r \times l$, such that for all $j \neq i, j \in [k], t \in [r]$,
\begin{align} 
S_{i,j} =& S_{i,k+t}A_{t,j}, \label{eq6}\\
\sum_{t=1}^{r}S_{i,k+t}A_{t,i}=&\mathbb{F}^l, \label{eq7}
\end{align}
where the equality is defined on the row spans instead of the matrices.

We mention here that if we use the simple repair strategy,\eqref{eq21} holds for all nodes $i$ with the possible exception of a single node. For instance see $N4$ in the following example.
However in the subsequent sections, we will shorten the code by one node if such exception exists and assume identical $S_{i,j}=S_i$ for all $i \in [k]$.

\begin{xmpl} \label{eg:fig1}
  In Figure \ref{fig1}, the matrices $S_i$ are
  $$S_1=(1, 0), S_2=(0, 1), S_3=(1,1).$$
  One can check that the subspace property \eqref{eq8}, \eqref{eq9} is satisfied for $i=1,2,3$. For instance, in order to repair systematic node $N3$, we need to transmit the sum of the elements from each node, which is equivalent to multiply each column by the matrix $S_3=(1,1)$. Note that $(1,1)$ is an eigenvector for $A_{t,j}$, $t=1,2, j=1,2,4$, hence we have $S_3 = S_3 A_{t,j}$, where the equality is between the subspaces. Furthermore, it is easy to check that $$S_3  \oplus S_3 A_{2,3}=\spun(1,1) \oplus \spun(t+1,t) = \mathbb{F}_4^2.$$  Node $N4$ is an exception, since the matrices $S_{4,j}$'s are not equal. In fact $S_{4,j}=(1,t)$ for $j=1,2,3,5$, and $S_{4,6}=(1,t+1)$.
\end{xmpl}

\subsection{Reconstruction}
If no more than $r$ of the nodes are erased, the MDS property requires that the entire information can be decoded from the remaining nodes. Usually this requirement can be satisfied by choosing proper coefficients in the encoding matrices over a large enough finite field. And in our constructions, it is satisfied by proper eigenvalues of the encoding matrices, as shown in the subsequent sections.

%
%
\section{Optiaml-Bandwidth Code Construction} \label{sec3}
%
%
In this section, we will construct a code with arbitrary number of parity nodes. Our code will have column length $l=r^m$, $k=(r+1)m$ systematic nodes, and $r$ parity nodes, for any positive integers $r, m$. We start with the construction description and proof for optimal repair, and then discuss the update and access complexity of the code, and at last argue that the entire information is reconstructible from any $r$ node erasures. 

\subsection{Construction}
We define the code, or equivalently the encoding matrices, in terms of their eigenspaces. We define $k$ diagonalizable matrices $A_1,...,A_k$ of order $l=r^m$, whose Jordan canonical form are diagonal matrices. Each matrix $A_i$ will have $r$ distinct non zero eigenvalues that correspond to $r$ eigenspaces, each of dimension $l/r=r^{m-1}$. The encoding matrix for parity node $k+s$, and systematic node $i$ is defined as
\begin{equation}
A_{s,i}=A_i^{s-1},s\in [r],i\in [k].
\label{tamtam}
\end{equation}
\textbf{Remark:}
\begin{enumerate}
	\item Each symbol in the first parity is simply a linear combination of the corresponding row, since $A_{1,i}=A_i^{1-1}=I$ for any $i$.
	\item Denote by $V_{i,0},V_{i,1},\dots,V_{i,r-1}$ the left eigenspaces of $A_i$ that correspond to eigenvalues $\lambda_{i,0},\lambda_{i,1},\dots,\lambda_{i,r-1}$, then $A_{s,i}$ has eigenvalues $\lambda_{i,0}^{s-1},\lambda_{i,1}^{s-1},\dots,\lambda_{i,r-1}^{s-1}$.
\end{enumerate}
By abuse of notations, $V_{i,u}$ represents both the eigenspace and the $l/r \times l$ matrix containing $l/r$ linearly independent eigenvectors.
Our construction will only focus on the matrix $A_{i}$.
Using the definition of the encoding matrices in \eqref{tamtam} the \textbf{subspace property} becomes
\begin{equation} \label{eq24}
S_i = S_i A_{j}, \forall j \neq i, j \in [k]
\end{equation}
\begin{equation}\label{eq25}
S_i + S_i A_{i} + S_i A_{i}^2 + \dots + S_i A_{i}^{r-1} = \mathbb{F}^l
\end{equation}
Hence, when a systematic node $i$ is erased, $i \in [k]$, we are going to use the subspace $S_i$ in order to optimally repair it. We term this subspace as the \emph{repairing subspace} of node $i$.

In the first step we will only define the eigenspaces of each matrix $A_i$ without specifying the eigenvalues. This will be enough to show the optimal repair property of the code. Then we will show that over a large finite field, there exist an assignment for the eigenvalues, that guarantees the MDS property as well.

Let $\{e_a: a=0,...,l-1\}$ be some basis of $\mathbb{F}^{l}$, for example, one can think of them as the standard basis vectors. The subscript $a$ is represented by its $r$-ary expansion, $a=(a_1,a_2,\dots,a_m)$, where $a_i$ is its $i$-th digit.
Moreover, define $M_{a,i}$ to be the $r$ indices in $[0,r^m-1]$ that differ from $a$ in at most their $i$-th digit. For example, when $r=3,m=4$, we have $e_5=e_{(0,0,1,2)}$, and $M_{5,3}=\{(0,0,0,2)=2,(0,0,1,2)=5,(0,0,2,2)=8\}.$
Next we define $(r+1)m$ subspaces for $i \in [m], u \in [0,r]$:
\begin{eqnarray}
P_{i,u} &=& \spun(e_a: a_i=u), \text{ for $u=0,...,r-1$,} \nonumber \\
P_{i,r} &=& \spun(\sum_{a'\in M_{a,i}}e_{a'}: a \in [0,r^m-1]). \label{eq29}
\end{eqnarray}
Note that for $u\neq r, P_{i,u}$ is spanned by the set of basis vectors whose $i$-th digit index is $u$, and therefore its has dimension $l/r$. It easy to check that also $P_{i,r}$ is a subspace of dimension $l/r$. For example, when $r=3,m=2$,
\begin{align*}
P_{1,0}&=\spun(e_{(0,0)},e_{(0,1)},e_{(0,2)})=\spun(e_0,e_1,e_2),\\
P_{1,1}&=\spun(e_3,e_4,e_5), P_{1,2}=\spun(e_6,e_7,e_8), \text{ and } \\
P_{1,3}&=\spun(e_0+e_3+e_6, e_1+e_4+e_7, e_2+e_5+e_8).
\end{align*}
Using these $k=(r+1)m$ subspaces, we define the $k$ matrices $A_i$ that correspond to the $k$ systematic nodes.

\begin{cnstr} \label{cnstr2}
Let $u \in [0,r],i \in [m]$. For each $um+i\in [k]$, define the matrix $A_{um+i}$ as follows: Its  eigenspaces are $P_{i,u'}, u'\neq u$ that correspond to distinct nonzero eigenvalues. Furthermore, Let $P_{i,u}$ be the repairing subspace, namely $S_{um+i}=P_{i,u}$.
\end{cnstr}

\begin{xmpl}
Deleting node N4 of the code in  Figure \ref{fig1} yields to a $(5,3,2)$ code constructed using Construction \ref{cnstr2}.
Moreover, the code in Figure \ref{fig2} is an $(8,6,4)$ code, constructed using Construction \ref{cnstr2}.
One can check the subspace property holds. For instance, $S_1=\spun\{e_0,e_1\}=\spun\{e_0+e_1,e_1\}$ is an invariant subspace of $A_2$. So $S_1=S_1A_2$. If the two eigenvalues of $A_i$ are distinct, it is easy to show that $S_i \oplus S_i A_i = \mathbb{F}^4$, $\forall i \in [6]$.
\end{xmpl}

\begin{figure*}
  \centering
  \begin{tabular}{|c|c|c|c|c|c|c|}
    \hline
    Node index $i$ & 1 & 2 & 3 & 4 & 5 & 6 \\
    \hline
    Basis for 1st  & $e_0+e_2$ & $e_0+e_1$ & $e_0$ & $e_0$ & $e_0$ & $e_0$ \\
    eigenspace of $A_i$       & $e_1+e_3$ & $e_2+e_3$ & $e_1$ & $e_2$ & $e_1$ & $e_2$ \\
    \hline
    Basis for 2nd & $e_2$ & $e_1$ & $e_0+e_2$ & $e_0+e_1$ & $e_2$ & $e_1$ \\
    eigenspace  of $A_i$       & $e_3$ & $e_3$ & $e_1+e_3$ & $e_2+e_3$ & $e_3$ & $e_3$ \\
    \hline
   Basis for repairing                & $e_0$ & $e_0$ & $e_2$ & $e_1$ & $e_0+e_2$ & $e_0+e_1$ \\
    subspace $S_i$          & $e_1$ & $e_2$ & $e_3$ & $e_3$ & $e_1+e_3$ & $e_2+e_3$ \\
    \hline
  \end{tabular}
  \caption{(n=8,k=6,l=4) code. The first parity node is assumed to be the row sum, and the second parity is computed using encoding matrices $A_i$. Each encoding matrix is defined by its two eigenspaces of dimension $2$. In order to repair node $i$, each surviving node projects its information on the repairing subspace $S_i$, namely it multiplies its columns by the matrix $S_i$. E.g., node $N5$ has two distinct eigenspaces $\spun(e_0,e_1), \spun(e_2,e_3)$. Furthermore, if this node is lost, each surviving node projects its information on the subspace $S_5=\spun(e_0+e_2,e_1+e_3)$.}\label{fig2}
\end{figure*}

\begin{xmpl} \label{xmpl2}
Figure \ref{fig3} illustrates the subspaces $P_{i,u}$ for $r=3$ parities and column length $l=9$. Figure \ref{fig4} is a code constructed from these subspaces with $8$ systematic nodes. One can see that if a node is erased, one can transmit only a subspace of dimension $3$ to repair, which corresponds to only $1/3$ repair bandwidth fraction. Recall that the three encoding matrices for systematic node $i$ are $I, A_i, A_i^2$, for $i \in [8]$.
\end{xmpl}

\begin{figure}
\centering
\begin{tabular}{|c|c|c|c|c|c|c|c|c|}
    															\hline
    															& $P_{1,0}$ & $P_{1,1}$ & $P_{1,2}$ & $P_{1,3}$ & $P_{2,0}$ & $P_{2,1}$ & $P_{2,2}$ & $P_{2,3}$ \\
    															\hline
      														& $e_0$ & $e_3$ & $e_6$ & $e_0+e_3+e_6$ & $e_0$ & $e_1$ & $e_2$ & $e_0+e_1+e_2$ \\
    \text{Basis for the subspace} & $e_1$ & $e_4$ & $e_7$ & $e_1+e_4+e_7$ & $e_3$ & $e_4$ & $e_5$ & $e_3+e_4+e_5$ \\
      														& $e_2$ & $e_5$ & $e_8$ & $e_2+e_5+e_8$ & $e_6$ & $e_7$ & $e_8$ & $e_6+e_7+e_8$ \\
    															\hline
\end{tabular}
\caption{Basis Sets of vectors used to construct a code with $r=3$ parities and column length $l=3^2=9$.}
\label{fig3}
\end{figure}

\begin{figure}
\centering
\begin{tabular} {|c|c|c|c|c|c|c|c|c|}
    \hline
    $i$ & 1 & 2 & 3 & 4 & 5 & 6 & 7 & 8 \\
    \hline
    													 & $P_{1,3}$ & $P_{2,3}$ & $P_{1,0}$ & $P_{2,0}$ & $P_{1,0}$ & $P_{2,0}$ & $P_{1,0}$ & $P_{2,0}$ \\

    \text{The $3$ eigenspaces} & $P_{1,1}$ & $P_{2,1}$ & $P_{1,3}$ & $P_{2,3}$ & $P_{1,1}$ & $P_{2,1}$ & $P_{1,1}$ & $P_{2,1}$ \\

    													 & $P_{1,2}$ & $P_{2,2}$ & $P_{1,2}$ & $P_{2,2}$ & $P_{1,3}$ & $P_{2,3}$ & $P_{1,2}$ & $P_{2,2}$ \\
    \hline
    Repairing subspace 				 & $P_{1,0}$ & $P_{2,0}$ & $P_{1,1}$ & $P_{2,1}$ & $P_{1,2}$ & $P_{2,2}$ & $P_{1,3}$ & $P_{2,3}$ \\
    \hline
\end{tabular}
\caption{An $(n=11,k=8,l=9)$ code. The subspaces $P_{i,u}$ are listed in Figure \ref{fig3}.}
\label{fig4}
\end{figure}

The following theorem shows that the code indeed has optimal repair bandwidth $1/r$.
\begin{thm}\label{thm2}
Construction \ref{cnstr2} has optimal repair bandwidth $1/r$ when repairing one systematic node.
\end{thm}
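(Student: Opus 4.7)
The plan is to verify the two conditions of the subspace property \eqref{eq24}--\eqref{eq25} for each systematic node $um+i$, where $u\in[0,r]$ and $i\in[m]$, whose repairing subspace is $S_{um+i}=P_{i,u}$. Before doing so, I would first establish an auxiliary ``general position'' lemma: for each fixed $i\in[m]$, any $r$ of the $r+1$ subspaces $P_{i,0},\dots,P_{i,r}$ form a direct sum equal to $\mathbb{F}^l$. Omitting $P_{i,r}$ is immediate because $P_{i,0},\dots,P_{i,r-1}$ partition the standard basis. Omitting $P_{i,u}$ for some $u<r$ reduces, by dimension count, to showing that the composition $P_{i,r}\hookrightarrow \mathbb{F}^l \twoheadrightarrow \mathbb{F}^l/\bigoplus_{j<r,\,j\neq u}P_{i,j}\cong P_{i,u}$ is surjective; this holds because the generator $\sigma_a:=\sum_{a'\in M_{a,i}}e_{a'}$ of $P_{i,r}$ maps to the unique summand $e_{\bar a}$ with $\bar a_i=u$, and these exhaust a basis of $P_{i,u}$ as $a$ varies. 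As a byproduct, each matrix $A_{um+i}$ is well defined on its prescribed $r$ eigenspaces.

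For the invariance condition $P_{i,u}=P_{i,u}A_{u'm+i'}$ at any distinct node $(u',i')\neq(u,i)$, I split on whether $i=i'$. When $i=i'$ (forcing $u\neq u'$), the subspace $P_{i,u}$ is itself one of the eigenspaces of $A_{u'm+i}$, hence trivially invariant. When $i\neq i'$, invariance is equivalent to the decomposition $P_{i,u}=\bigoplus_{u''\neq u'}(P_{i,u}\cap P_{i',u''})$, which I would verify on generators. If $u<r$: a basis vector $e_a\in P_{i,u}$ with $a_i=u$ already lies in the eigenspace $P_{i',a_{i'}}$ whenever $a_{i'}\neq u'$; and when $a_{i'}=u'<r$, I rewrite $e_a=\sigma_{a,i'}-\sum_{b'\in M_{a,i'}\setminus\{a\}}e_{b'}$, where $\sigma_{a,i'}\in P_{i',r}$ and each remaining $e_{b'}$ lies in $P_{i',b'_{i'}}$ with $b'_{i'}\neq u'$; crucially, since $i'\neq i$ and $M_{a,i'}$ fixes all digits except the $i'$-th, every listed vector still has $i$-th digit $u$ and so remains in $P_{i,u}$. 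If $u=r$: the generator $\sigma_a\in P_{i,r}$ already lies in $P_{i',a_{i'}}$ whenever $a_{i'}\neq u'$; in the remaining subcase $a_{i'}=u'<r$, I expand each summand of $\sigma_a$ as above, and the $P_{i',r}$-component simplifies, via the ``row sum equals column sum'' identity on the $r\times r$ grid in the $i$-th and $i'$-th coordinates, to a sum of generators $\sigma_{\cdot,i}\in P_{i,r}$; the remaining off-$r$ eigenspace components likewise collapse to single $\sigma_{\cdot,i}\in P_{i,r}$, so the entire decomposition stays in $P_{i,r}$.

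The spanning condition $\sum_{s=0}^{r-1}P_{i,u}A_{um+i}^{s}=\mathbb{F}^l$ I would prove via a Vandermonde argument using the eigendecomposition $\mathbb{F}^l=\bigoplus_{u''\neq u}P_{i,u''}$. Let $\pi_{u''}$ denote the projection onto $P_{i,u''}$ along the other eigenspaces; then $vA_{um+i}^{s}=\sum_{u''\neq u}\lambda_{u''}^{s}\pi_{u''}(v)$ for $v\in P_{i,u}$. Fixing any basis $b_1,\dots,b_{l/r}$ of $P_{i,u}$, the $l$ vectors $\{b_\alpha A_{um+i}^{s}\}_{\alpha,s}$ are linearly independent, hence a basis of $\mathbb{F}^l$, provided (i) each restricted projection $\pi_{u''}$ on $P_{i,u}$ is injective, and (ii) the Vandermonde matrix $(\lambda_{u''}^{s})$ over the $r$ distinct nonzero $\lambda_{u''}$ is invertible. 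Condition (ii) is classical; for (i), injectivity reduces to $P_{i,u}\cap\bigoplus_{u'''\neq u'',\,u'''\neq u}P_{i,u'''}=0$, and adding $P_{i,u}$ to that kernel yields the collection $\{P_{i,u'''}:u'''\neq u''\}$, which is $r$ of the $r+1$ subspaces and so has direct sum $\mathbb{F}^l$ by the general position lemma. The main technical obstacle is the invariance subcase $i\neq i'$, $u=r$, $u'<r$, where both the repairing subspace and the relevant off-eigenspace $P_{i',u'}$ are of parity type; making the decomposition close up inside $P_{i,r}$ requires the combinatorial grid identity mentioned above, whereas all other subcases are direct.
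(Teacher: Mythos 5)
Your proposal is correct and follows essentially the same route as the paper: a case split on whether the erased node and the surviving node share the digit position $i$, eigenspace invariance for the shared-position case, and a Vandermonde argument for the spanning condition \eqref{eq25}. The only difference is one of rigor rather than substance — you explicitly prove the general-position lemma and the decomposition \eqref{eq:145} (including the genuinely non-obvious subcase where both $P_{i,r}$ and $P_{i',u'}$ with $u'<r$ are involved, which your grid identity handles correctly), and you treat the spanning condition uniformly for all $u$ including $u=r$, whereas the paper asserts these steps as "easy to verify" or "proved similarly."
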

\begin{IEEEproof}
For distinct integers $um+i,u'm+i'\in [k]$ for $u,u'\in [0,r-1]$ and $i,i'\in [m]$ we will show that \eqref{eq24} is satisfied, namely
$$S_{um+i}A_{u'm+i'}=S_{um+i}.$$

\begin{itemize}
\item Case $i\neq i'$:  It is easy to verify that the $r$ eigenspaces $T_1,...,T_r$ of  $A_{u'm+i'}$ satisfy
	\begin{equation}
    P_{i,u}=\sum_{j=1}^r(P_{i,u}\cap T_j).
    \label{eq:145}
    \end{equation}
    Notice that \eqref{eq:145} is usually \emph{not} correct for arbitrary subspaces $T_1,...,T_r$ that satisfy
    $\sum_{i}T_i=\mathbb{F}^l$. By definition $S_{um+i}=P_{i,u}$, then
	\begin{align*}
	S_{um+i}A_{u'm+i'}&=P_{i,u}A_{u'm+i'}\\
	&=(\sum_{j=1}^rP_{i,u}\cap T_j)A_{u'm+i'}\\
	&=\sum_{j=1}^r(P_{i,u}\cap T_j)A_{u'm+i'}\\
	&=\sum_{j=1}^r(P_{i,u}\cap T_j)\\
	&=P_{i,u}\\
	&=S_{um+i}.
	\end{align*}	
\item  Case $i=i'$, and  $u\neq u'$: By the construction, the eigenspaces of $A_{u'm+i}$ are $\{P_{i,1},...,P_{i,r}\}\backslash \{P_{i,u'}\}$. Since $u\neq u'$ then $P_{i,u}\in \{P_{i,1},...,P_{i,r}\}\backslash \{P_{i,u'}\}$, and
    $$S_{um+i}A_{u'm+i'}=P_{i,u}A_{u'm+i'}=P_{i,u}=S_{um+i}.$$
\item Case $i=i'$, and  $u=u'$: In this case we will only prove the case where $u=0$. The rest of the cases are proved similarly.
    Denote by $A_{um+i}=A,S=S_{um+i}$, then by \eqref{eq25} we need to show that
    $$S+SA+..+SA^{r-1}=\mathbb{F}^{l}.$$
    Denote the distinct eigenvalues of $A$ by $\lambda_0,\lambda_1,\dots,\lambda_{r-1}$.
    For a vector  $a=(a_1,a_2,\dots,a_m)$ or equivalently an integer $a \in [0,l-1]$, denote by $a_i(u)=(a_1,\dots,a_{i-1},u, a_{i+1},\dots,a_m)$ the vector that is the same as $a$ except the $i$-th entry, which is $u$.
    Notice that $S=\spun(P_{i,0})=\spun\{e_{a_i(0)}:\forall a \in [0,l-1]\}$ and
    \begin{eqnarray*}
    &&e_{a} A^s\\
     &=& (\sum_{u=0}^{r-1} e_{a_i(u)}-e_{a_i(1)}-\dots-e_{a_i(r-1)}) A^s \\
      &=& \lambda_0^s \sum_{u=0}^{r-1} e_{a_i(u)} - \lambda_1^s e_{a_i(1)} - \dots -  \lambda_{r-1}^s e_{a_i(r-1)} \\
      &= & \lambda_0^s e_{a_i(0)} + \sum_{u=1}^{r-1}(\lambda_0^s - \lambda_{u}^s) e_{a_i(u)}.
     \end{eqnarray*}
     Writing the equations for all $s \in [0,r-1]$ in a matrix, we get
     $$\left(
     \begin{array}{c}
       e_{a_i(0)} \\
       e_{a_i(0)} A \\
       e_{a_i(0)} A^2 \\
       \vdots \\
       e_{a_i(0)} A^{r-1} \\
     \end{array}
    \right)
    = M  \left(
    \begin{array}{c}
     e_{a_i(0)} \\
     e_{a_{i(1)}} \\
     \vdots \\
     e_{a_i(r-1)} \\
    \end{array}
    \right),$$
    with
     $$ M=\left(
     \begin{array}{cccc}
       1 & 0 & \cdots & 0 \\
       \lambda_0 & \lambda_{0}-\lambda_{1} & \cdots & \lambda_{0}-\lambda_{r-1} \\
       \lambda_0^2 & \lambda_{0}^2-\lambda_{1}^2 & \cdots & \lambda_{0}^2-\lambda_{r-1}^2 \\
       \vdots & \vdots &  & \vdots \\
       \lambda_0^{r-1} & \lambda_{0}^{r-1}-\lambda_{1}^{r-1} & \cdots & \lambda_{0}^{r-1}-\lambda_{r-1}^{r-1}
     \end{array}
    \right).
    $$
    After a sequence of elementary column operations, $M$ becomes the following Vandermonde matrix
    $$M'=\left(
        \begin{array}{cccc}
          1 & 1 & \cdots & 1 \\
          \lambda_0 & \lambda_1 & \cdots & \lambda_{r-1} \\
          \lambda_0^2 & \lambda_1^2 & \cdots & \lambda_{r-1}^2 \\
          \vdots & \vdots &  & \vdots \\
          \lambda_0^{r-1} & \lambda_1^{r-1} & \cdots & \lambda_{r-1}^{r-1} \\
        \end{array}
      \right).$$
    Since $\lambda_i$'s are distinct, we know $M'$ and hence $M$ is non-singular. Therefore, $\spun\{e_{a_i(0)}, e_{a_i(0)}A,\dots,e_{a_i(0)}A^{r-1}\}$ $= \spun\{e_{a_i(0)},e_{a_i(1)},\dots,e_{a_i(r-1)}\}$. Since $S_i$ contains $e_{a_i(0)}$ for all $r$-ary vector $a$, we know $S_i + S_i A_i + \dots + S_i A_i^{r-1} = \mathbb{F}^l$.
\end{itemize}
\end{IEEEproof}

\subsection{Update and access complexity}
We discuss the update and access complexity of our code in this subsection. First we make some observations. 
\begin{enumerate}
  \item The code restricted to the systematic nodes $i\in [m], u=r$ is equivalent to that of \cite{viveck,Tamo2}. Since the encoding matrices $A_i^Q$, are all diagonal, every information entry appears exactly once in each of the two parities, and therefore it appears $r+1$ times in the code (once in each of the parities and once in its systematic node). Clearly this is the minimum possible, since the code is an MDS. As mentioned in the introduction, this is an \emph{optimal-update} code. In \cite{Tamo3} it was proven that an optimal-update code with diagonal encoding matrices has no more than $m$ systematic nodes. But we will show an optimal-update construction in the next section with $2m$ systematic nodes but non-diagonal encoding matrices.
  \item Shortening the code to contain only the $rm$ systematic nodes $i\in [m],u \in [0,r-1]$ will result a code $\cC$ that is actually equivalent to the code in \cite{viveck3}. We assume here that $\{e_a, a \in [0,l-1]\}$ are standard basis. Namely, each repairing subspace $P_{i,u}$ can be represented by an $l/r\times l$ matrix, such that each row has exactly one nonzero entry. Therefore when repairing a node, only $l/2$ symbols from each surviving node are being read and transmitted to the repair center, with no need of any computations within the surviving node (e.g. Figure \ref{fig2}). Such a code is termed to have \emph{optimal access}. It was shown in \cite{Tamo3} that a code with optimal access has at most $2m$ nodes, therefore this construction is optimal. Namely it is a code with optimal access and maximum possible number of systematic nodes.
  \item We conclude that the code construction is a combination of the longest optimal-access code and the longest optimal-update code (with diagonal encoding matrices), which provides an interesting tradeoff among access, update, and the code length. In other words, we can achieve a larger number of nodes if we are willing to sacrifice the optimal-access and/or optimal-update properties. The shortening technique was also used in \cite{Kumar2009}\cite{Suh-alignment} in order to get optimal-repair code with different code rates.
\end{enumerate}

Clearly, the optimal-access property is highly desirable in a code. Therefore one might ask what is the longest code (in terms of $k$), that has the maximum number of nodes that can be repaired with optimal access. In particular let us consider codes with 2 parities. If we try to extend the optimal-access code $\cC$ with $2m$ systematic nodes to an optimal repair code $\cD$ with $k$ systematic nodes, then $k\leq 3m$, as the following theorem suggests. Therefore, our construction is longest in the sense of extending $\cC$.
Before proving the theorem we will need the following lemma.
\begin{lem}\cite[Lemma 8]{Tamo3}
The repairing subspaces $S_i$ of an optimal repair $(k+2,k,l)$ code satisfy that for any subset of indices $J\subseteq [k]$
$$\dim(\cap_{i\in J}S_i)\leq \frac{l}{2^{|J|}}.$$
\label{good lemma}
\end{lem}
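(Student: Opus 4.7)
The plan is to prove the bound by induction on $|J|$. The cases $|J|=0,1$ are immediate: $\dim \mathbb{F}^l = l$, and $\dim S_i = l/2$ for each $i$ by the subspace property (the sum in \eqref{eq9} is forced to be direct by dimensions, so each $S_i$ has full rank $l/2$). For the inductive step, write $J = J'\cup\{i_0\}$, set $W = \cap_{i\in J'} S_i$, and abbreviate $A = A_{2,i_0}$. By the inductive hypothesis $\dim W \le l/2^{|J'|}$, so it suffices to show $\dim(W\cap S_{i_0}) \le \tfrac{1}{2}\dim W$.

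The first step is to observe that $W$ is invariant under $A$ (acting on row spans on the right). Indeed, by \eqref{eq8}, for every $i\in J'$ we have $S_i A = S_i$, since $i\neq i_0$. Hence $W A \subseteq \cap_{i\in J'}(S_i A) = W$; invertibility of $A$ then forces $W A = W$.

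The key geometric step uses \eqref{eq9} specialized to node $i_0$ and $r=2$, which reads $S_{i_0}+S_{i_0}A = \mathbb{F}^l$. A dimension count shows the sum is direct, so $S_{i_0}\cap S_{i_0}A = \{0\}$. Because $A$ is a bijection of $\mathbb{F}^l$ that preserves $W$ and sends $S_{i_0}$ to $S_{i_0}A$, it restricts to an isomorphism $W\cap S_{i_0}\to W\cap S_{i_0}A$; in particular these two subspaces have the same dimension. Moreover, their intersection lies in $W\cap(S_{i_0}\cap S_{i_0}A) = \{0\}$, so they sit inside $W$ as an internal direct sum. This yields
\[
2\dim(W\cap S_{i_0}) = \dim(W\cap S_{i_0}) + \dim(W\cap S_{i_0}A) \le \dim W,
\]
and combining with the inductive bound on $\dim W$ gives $\dim(W\cap S_{i_0})\le l/2^{|J|}$, completing the induction.

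There is no serious obstacle once the correct decomposition is identified; the entire argument is the interplay of two inputs — the $A$-invariance of $W$ inherited from the intersection, and the complementary splitting $\mathbb{F}^l = S_{i_0}\oplus S_{i_0}A$ that is available precisely because $r=2$. The one bit of care needed is the row-span convention noted right before Claim 2: the statements $WA=W$ and $A(W\cap S_{i_0}) = W\cap S_{i_0}A$ are identities of subspaces, but invertibility of $A$ makes this a nonissue.
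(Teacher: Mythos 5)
Your argument is correct and complete: the induction on $|J|$, the $A_{2,i_0}$-invariance of $W=\cap_{i\in J'}S_i$ coming from \eqref{eq8}, and the splitting $\mathbb{F}^l=S_{i_0}\oplus S_{i_0}A_{2,i_0}$ from \eqref{eq9} (with the standing convention $A_{1,j}=I$) are exactly the ingredients of the proof of this lemma in the cited reference \cite{Tamo3}, which this paper invokes without reproving. No gaps; the only hypotheses you use are the subspace property and invertibility of the encoding matrices, both of which are available here.
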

\begin{thm} \label{thm3}
Any extension of an optimal access code with $2m$ systematic nodes to an optimal repair code, will have no more than $3m$ systematic nodes, for $r=2$ parities.
\end{thm}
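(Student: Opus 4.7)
The plan is to bound $t := k - 2m$, the number of systematic nodes added to $\cC$ beyond the original $2m$, by $m$. Throughout I write $P_{i, u}$ for the original repairing subspaces ($u \in \{0,1\}, i \in [m]$) and $S_{n_1}, \ldots, S_{n_t}$ for the new ones, with corresponding encoding matrices $A_{n_1}, \ldots, A_{n_t}$.

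I would first invoke Lemma 8 to show that no $S_{n_j}$ contains a standard basis vector $e_a$: for each $a$, the $m$ original indices $\{(a_i, i): i \in [m]\}$ satisfy $\bigcap_{i=1}^m P_{i,a_i} = \spun(e_a)$, so adjoining $n_j$ yields a set of size $m+1$ with $\dim(\spun(e_a)\cap S_{n_j}) \leq l/2^{m+1} < 1$, hence zero. Next, since each $A_{n_j}$ must preserve every $P_{i,0}$ and $P_{i,1}$, it preserves every one-dimensional intersection $\spun(e_a)$; consequently $A_{n_j}$ is diagonal in the standard basis.

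Combining this diagonality with the invariance of each $S_{n_j}$ under every original $A_i, A_{m+i}$ exploits the tensor decomposition $\mathbb{F}^l \cong (\mathbb{F}^2)^{\otimes m}$: on the $i$-th factor the pair $A_i, A_{m+i}$ generates the upper-triangular stabilizer of the common eigenline $L_i = \spun(e_0^{(i)} + e_1^{(i)})$. In the adapted tensor basis $v_\epsilon = \bigotimes_i v_{\epsilon_i}^{(i)}$ with $v_0^{(i)} = e_0^{(i)} + e_1^{(i)}$ and $v_1^{(i)} = e_1^{(i)}$, every invariant subspace is spanned by $\{v_\epsilon : \epsilon \in S_j\}$ for some downward-closed $S_j \subseteq \{0,1\}^m$; since $\dim S_{n_j} = l/2$, one has $|S_j| = l/2$. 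The $m$ subspaces $P_{i, 2}$ correspond to the coordinate hyperplanes $\{\epsilon : \epsilon_i = 0\}$ and supply $m$ natural candidates for new repairing subspaces.

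The main obstacle is the final compatibility step: each diagonal $A_{n_j}$ must preserve every other $S_{n_{j'}}$ while failing to preserve $S_{n_j}$ itself (the latter being equivalent to the optimal-repair condition $S_{n_j} + S_{n_j} A_{n_j} = \mathbb{F}^l$). In the canonical case where each $S_{n_j}$ equals $P_{i_j, 2}$, preservation of the other $P_{i_{j'},2}$'s forces the diagonal entries $\alpha^{(j)}_a$ of $A_{n_j}$ to be independent of $a_{i_{j'}}$ for every $j' \neq j$, while nontrivial dependence on $a_{i_j}$ is required for the repair condition; this is only possible when the $i_j$'s are distinct, giving $t \leq m$. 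To handle a general $S_{n_j}$, I would introduce the subalgebra $Q_{n_j} := \{A \in D : A S_{n_j} \subseteq S_{n_j}\}$ inside the diagonal algebra $D$, observe that $A_{n_j} \in \bigcap_{j' \neq j} Q_{n_{j'}} \setminus Q_{n_j}$ for every $j$, and extract the bound $t \leq m$ via a dimension count tied to the combinatorics of downward-closed subsets of size $l/2$ in $\{0,1\}^m$. The case of non-canonical $S_j$'s, where $Q_{n_j}$ may be much smaller than $l/2$, is where I expect the real work to lie.
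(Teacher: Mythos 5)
Your first half tracks the paper's argument: you establish that every added encoding matrix $A_{n_j}$ must be diagonal because it preserves each original repairing subspace, hence each intersection $\bigcap_i P_{i,a_i}=\spun(e_a)$. (The paper reaches the same conclusion slightly more generally, for an arbitrary optimal-access $\cC$: it uses Lemma~\ref{good lemma} plus an averaging argument to show each $e_a$ lies in \emph{exactly} $m$ of the $2m$ repairing subspaces and equals their intersection; you instead hard-wire the repairing subspaces to be the $P_{i,u}$ of Construction~\ref{cnstr2}, which is fine for the intended application but does not cover the theorem as stated.) Your opening observation that no $S_{n_j}$ contains a standard basis vector is correct but never used.

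The genuine gap is in the second half. Once the new matrices are known to be diagonal, the paper is done: it observes that the restriction of $\cD$ to the added nodes is an optimal-repair code with diagonal encoding matrices, and invokes Theorem~6 of \cite{Tamo3}, which says such a code has at most $m=\log_2 l$ systematic nodes, giving $k-2m\le m$. That cited theorem is exactly the statement you are trying to prove from scratch in your last paragraph, and you only complete it in the ``canonical'' case $S_{n_j}=P_{i_j,2}$; for general repairing subspaces you write that this ``is where I expect the real work to lie,'' which is an accurate self-diagnosis. Your tensor-algebra setup (the $S_{n_j}$ are spans of downward-closed subsets of the adapted basis, since they are invariant under the full triangular algebra generated by the $A_i,A_{m+i}$ on each factor) is a plausible and correct structural reduction, but the concluding ``dimension count tied to the combinatorics of downward-closed subsets'' is not supplied, and it is not routine: one must reconcile invariance under diagonal matrices (standard basis) with the order-ideal structure (adapted basis) and show that $t$ mutually compatible subspace/matrix pairs force $t\le m$. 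As it stands the proposal proves the theorem only in a special case; either carry out that combinatorial argument in full or close the proof the way the paper does, by reducing to diagonal encoding matrices and citing the known $\log_2 l$ bound for optimal-update codes.
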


\begin{IEEEproof}
Let $\cC$ be an optimal-access code of length $2m$ with 2 parities. Let $\cD$ be an extended code of $\cC$.
By equivalently transforming the encoding matrices (see \cite{Tamo3}), we can always assume the encoding matrices of the parities in $\cD$ are
$$\left(
    \begin{array}{cccccc}
      I & \cdots & I & I & \cdots & I \\
      A_1 & \cdots & A_{2m} & A_{2m+1} & \cdots & A_{k} \\
    \end{array}
  \right).
$$

Here the first $2m$ column blocks correspond to the encoding matrices of $\cC$.
First consider the code $\cC$, that is the first $2m$ nodes.
If $\cC$ has optimal access, then each repairing subspace is spanned by $l/2$ standard basis vectors.
Since $\cC$ contains $2m$ systematic nodes, on average each standard basis vector appears in $2m\times \frac{l}{2} \times \frac{1}{l} = m$ repairing subspaces.
For each $i=0,...,l-1$ let $J\subseteq [2m]$ be the subset of indices of the repairing subspaces that contain the vector $e_i$. We claim that each standard basis vector appears exactly $m$ times, namely for each $i$ the size of $J$ is $m$. Assume to the contrary that $|J|>m$ for some $i$. By Lemma \ref{good lemma}
$$1\leq \dim(\cap_{i \in J} S_i) \le \frac{2^m}{2^{|J|}}<1,$$
and we get a contradiction. Moreover, if there exists $J$ of size less than $m$, then by a simple counting argument we get that there exists an $J'$ of size greater than $m$, which can not happen. Hence, we conclude that for each $i$ the size of $J$ is exactly $m$ and,
$$\spun(e_i)=\cap_{i\in J}S_i.$$
Now consider a systematic node $j\in [2m+1,k]$ that was added to the code $\cC$. Since $\cD$ is an optimal repair code, each repairing subspace of the nodes in $\cC$ is an invariant subspace of $A_j$. Since the intersection of invariant subspaces is again an invariant subspace we get that for any $i=0,...,l-1$
$$\cap_{i\in J}S_i=\spun(e_i)$$
is an invariant subspace of $A_j$. Namely, each standard basis vector is an eigenvector of $A_j$, and therefore $A_j$ is a diagonal matrix. We conclude that restricting the code $\cD$ to its last $k-2m$ systematic nodes will yield to an optimal update code. By \cite{Tamo3}[Theorem $6$], there are only $m$ nodes that are all optimal update, hence $k-2m\le m$.
\end{IEEEproof}

\subsection{Reconstruction and finite field size}
Next we will show that the code can be made to be MDS over a large finite field.
\begin{thm}\label{thm5}
The code can be made an MDS over a field large enough.
\end{thm}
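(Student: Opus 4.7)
The plan is to recast the MDS property as the simultaneous non-vanishing of a finite list of determinants, viewed as polynomials in the eigenvalue parameters $\{\lambda_{i,u}\}_{i\in[k],\,u\in[0,r-1]}$, and then invoke the standard fact that a nonzero polynomial of degree $D$ admits a non-root over any field of size greater than $D$. First I would unpack the MDS condition. Suppose $f$ systematic nodes indexed by $\{i_1,\ldots,i_f\}\subseteq [k]$ and $r-f$ parities are erased, leaving $f$ parity equations with indices $\{s_1,\ldots,s_f\}\subseteq[r]$. Recovery of the lost systematic columns from the surviving parities is possible if and only if the $fl\times fl$ block matrix
\[
M_{(\mathbf{i},\mathbf{s})} = \parenv{A_{i_t}^{s_j-1}}_{j,t\in [f]}
\]
is invertible. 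Ranging over $f\le r$, $\mathbf{i}$, and $\mathbf{s}$ yields a finite collection of such conditions, one per supported erasure pattern.

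Since Construction \ref{cnstr2} fixes the eigenspaces of each $A_i$ but leaves the $r$ eigenvalues free, $\det M_{(\mathbf{i},\mathbf{s})}$ is a polynomial of bounded degree in the $kr$ variables $\lambda_{i,u}$. The key step, and the main obstacle, is to verify that each such polynomial is not identically zero. The base case $f=1$ is immediate: $\det(A_i^{s-1})\ne 0$ whenever all eigenvalues are nonzero. For $f>1$, the natural route is induction, using the spectral decomposition $A_{i_t}^{s-1}=\sum_u \lambda_{i_t,u}^{s-1}E_{i_t,u}$, where $E_{i_t,u}$ is the projector onto the prescribed eigenspace, together with block row operations that, for generic eigenvalue choices, eliminate one eigencomponent of $A_{i_1}$ and reduce to a smaller block system of the same shape. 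The non-commutativity of the $A_i$'s is the genuine difficulty, and it is managed by the compatibility property $P_{i,u}=\sum_j(P_{i,u}\cap T_j)$ used inside Theorem \ref{thm2}, which guarantees that each fixed eigenspace splits cleanly with respect to the eigenspaces of any other $A_j$. Alternatively, one can bypass an explicit induction by exhibiting a single generic evaluation, e.g., $\lambda_{i,u}=\alpha^{c(i,u)}$ for a primitive element $\alpha$ and carefully spread exponents $c(i,u)$, and analyzing leading terms in $\alpha$ to see that each $\det M_{(\mathbf{i},\mathbf{s})}$ cannot vanish.

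Finally, once each of the finitely many determinants has been shown to be a nonzero polynomial, I would multiply them together with the auxiliary factor
\[
\prod_{i\in [k]}\prod_{u\ne u'}(\lambda_{i,u}-\lambda_{i,u'})\cdot\prod_{i,u}\lambda_{i,u},
\]
which enforces the distinct-nonzero eigenvalue hypothesis of Construction \ref{cnstr2}. The resulting product is a nonzero polynomial of finite total degree $D$ in the $\lambda_{i,u}$; a Schwartz--Zippel style counting argument then yields an assignment $\{\lambda_{i,u}\}\subseteq \mathbb{F}_q^*$ avoiding its zero set whenever $q>D$. The encoding matrices obtained from this assignment define a code which is MDS over $\mathbb{F}_q$, completing the proof.
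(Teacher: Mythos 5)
Your overall framework (collect the finitely many block determinants whose non-vanishing is equivalent to the MDS property, multiply them into one polynomial, and choose a non-root over a large enough field) matches the paper's, but the step you yourself identify as ``the key step, and the main obstacle'' --- showing that each $\det M_{(\mathbf{i},\mathbf{s})}$ is not identically zero as a polynomial in the eigenvalues $\lambda_{i,u}$ --- is exactly the step you do not carry out, and the tools you propose for it do not suffice. The compatibility property $P_{i,u}=\sum_j(P_{i,u}\cap T_j)$ from Theorem \ref{thm2} holds only between matrices attached to \emph{different} digit positions $i\neq i'$; for two nodes $um+i$ and $u'm+i$ with the same digit position, the matrices share $r-1$ of their $r$ eigenspaces, do not commute in general (because $P_{i,r}$ does not decompose as a sum of intersections with the coordinate eigenspaces $P_{i,u'}$, $u'\neq r$), and the block-row-elimination induction you sketch does not go through. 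The paper itself only proves non-vanishing in the eigenvalue variables for $r=2$, in the separate theorem following Construction \ref{construction 2}, and that already requires a careful case analysis of shared eigenspaces.

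The paper's proof of Theorem \ref{thm5} avoids this difficulty entirely by a device you missed: it \emph{fixes} an arbitrary choice of $r$ distinct nonzero eigenvalues per matrix (so every $A_{s,i}$ is some fixed invertible matrix) and instead introduces a fresh scalar variable $x_{(s-1)k+i}$ multiplying each encoding block. Scalar multiples do not change row spans, so the subspace property and hence optimal repair are preserved. Each $t\times t$ block determinant, viewed as a polynomial in the $x$'s, then has a monomial (the one in which every row block is matched to its diagonal column block) whose coefficient is $\prod_j\det(A_{a_j,b_j})\neq 0$; this is its leading term under the lexicographic order, leading terms multiply, and the Combinatorial Nullstellensatz produces a good assignment over a sufficiently large field. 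In short: the paper trades genericity in the eigenvalues for genericity in auxiliary scalars, which reduces the non-vanishing claim to the invertibility of the individual encoding matrices --- immediate by construction --- whereas your route leaves a nontrivial (and for general $r$, unproven) algebraic claim open.
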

\begin{IEEEproof}
Assign arbitrarily $r$ distinct nonzero eigenvalues to each matrix $A_i$. Recall that the encoding matrices are defined as $A_{s,i}=A_i^{s-1}$, therefore each one of them is invertible. We multiply each encoding matrix $A_{s,i}$ by a specific variable $x_{(s-1)k+i}$, to get a new code defined by the matrix  \begin{equation} \label{eq17}
\left[ \begin{array}{ccc}
x_{1}A_{1,1}  & \cdots & x_{k}A_{1,k} \\
\vdots & \ddots & \vdots \\
x_{(r-1)k+1}A_{r,1}  &\cdots & x_{rk}A_{r,k}
\end{array} \right].
\end{equation}
Clearly the new code is MDS iff any $t\times t$ block submatrix in \eqref{eq17} is invertible, for any $t\in [r]$. Define the multivariate polynomial $P$ in the variables $x_{s,i}$, which is the product of the determinants of all the $t\times t$ block submatrices, for any $t=1,...,r$. Hence, the code can be made to be MDS if there is an assignment to the variables that does not evaluate $P$ to zero. Let $x=(x_{1},...,x_{rk})$ be  the vector of the variables. For a vector of integers $a=(a_1,...,a_{rk})$ we define $x^a=\prod_{i}x_{i}^{a_i}$. Furthermore, define the usual ordering on the terms $x^a$ according to the lexicographic order, i.e., $x^a\ge x^b$ iff $a\ge b$ according to the lexicographic order. The leading coefficient of a multivariate polynomial, is the coefficient of the maximal nonzero term. For example, the leading coefficient of the polynomial $2x_1^2x_3+x_1^2x_4$ is $2$.

Let $a=\{a_1<a_2<...<a_t\}$ and $b=\{b_1<b_2<...<b_t\}$ be two sets of indices of size $t$ in $[r]$ and $[k]$ respectively. Define $P_{a,b}$ to be the determinant of the submatrix restricted to row blocks $a$ and column blocks $b$. It is easy to see that its leading coefficient is $$\prod_{i}^t\det(A_{a_i,b_i}),$$  which is non zero, since by construction, each of matrices is invertible. Moreover if $P_1,P_2$ are the determinant of different submatrices, then the leading coefficient of their product $P_1\cdot P_2$, is the product of their leading coefficients. Since both of them are non zero, so is the product.
$P$ is a product of such polynomials $P_i$, therefore also its leading coefficient is non zero. Moreover, each $P_i$ is an homogeneous polynomial, hence so is $P$.  We conclude that $P$ has a nonzero term $x^a$ (its leading coefficient) of degree equal to $\deg(P)$. By the Combinatorial Nullstellensatz \cite{Alon-polynomial-method} we get that a field of size greater than  $\max_i\{a_i:a=(a_1,...,a_{rk})\}$ will suffice.
\end{IEEEproof}

For the case of 2 parities, we can explicitly specify the finite field size.
The following construction defines uniquely the encoding matrices, by defining their eigenvalues. This assignment of the eigenvalues guarantees the MDS property of the optimal repair code.
\begin{cnstr}
\label{construction 2}
Let $\{\lambda_{i,j}\}_{i\in [m],j=0,1}$ be an arbitrary $2m$ distinct non zero elements of the field $\mathbb{F}_q$, $q \ge 2m+1$.
Assign arbitrarily to each eigenspace of the matrix $A_{um+i}$, the eigenvalue $\lambda_{i,0}$ or $\lambda_{i,1}$, as long as
each $P_{i,u'}$ correspond to distinct eigenvalues in the two matrices it appears as an eigenspace, $u,u' \in \{0,1,2\}$.
\end{cnstr}

For example, we can assign eigenvalues in the following way:
$$
\begin{array}{c|cccc}
  \textrm{encoding matrix} & \textrm{1st eigensapce} & \textrm{1st eigenvalue} & \textrm{2nd eignenspace} & \textrm{2nd eignvalue} \\
  \hline
  A_{i}    & P_{i,2} & \lambda_{i,1} & P_{i,1} & \lambda_{i,0}  \\
  A_{m+i}  & P_{i,0} & \lambda_{i,1} & P_{i,2} & \lambda_{i,0}  \\
  A_{2m+i} & P_{i,0} & \lambda_{i,0} & P_{i,1} & \lambda_{i,1} \\
\end{array}
$$
Take the case of $m=2$ in Figure \ref{fig2}, we can use finite field $\mathbb{F}_5$ and assign the eigenvalues to be
$$(\lambda_{1,0},\dots,\lambda_{6,0})=( 1,2,1,2,4,3),$$
$$(\lambda_{1,1},\dots,\lambda_{6,1})=( 4,3,4,3,1,2).$$

Remark: If we have an extra systematic column with $A_{3m+1}=I$ (see column $N4$ in Figure \ref{fig1}), we can use a field of size $2m+2$ and simply modify the above construction such that all $\lambda_{i,j} \neq 1$, for $i \in [3m], j=0,1$. For example, when $m=1$, the coefficients in Figure \ref{fig1} are assigned using the above algorithm, where the field size is $4$.

\begin{thm}
There is an optimal repair $(3m+2,3m,2^m)$ MDS code if the finite field size is at least $2m+1$.
\end{thm}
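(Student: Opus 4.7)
The plan is to combine the optimal-repair property, which is already guaranteed by Theorem~\ref{thm2} for any assignment of two distinct nonzero eigenvalues per encoding matrix, with a direct verification of the MDS property. With $r=2$ parities, being MDS amounts to two things: (a) each encoding matrix $A_i$ is invertible, and (b) for every pair $i_1\ne i_2$ in $[k]$, the difference $A_{i_1}-A_{i_2}$ is invertible. Indeed, the only nontrivial $2\times 2$ block minors of the systematic parity check block matrix either are block triangular or, by a Schur complement on $\begin{pmatrix} I & I \\ A_{i_1} & A_{i_2}\end{pmatrix}$, reduce to $\det(A_{i_2}-A_{i_1})$. Part (a) is immediate since every eigenvalue $\lambda_{j,p}$ is nonzero.

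For part (b), the tool I would use is the natural tensor identification $\mathbb{F}^{2^m}\cong (\mathbb{F}^2)^{\otimes m}$ given by $e_a\mapsto e_{a_1}\otimes\cdots\otimes e_{a_m}$. Under this identification, the subspaces $P_{j,0},P_{j,1},P_{j,2}$ are obtained by tensoring the full $\mathbb{F}^2$ on every factor except the $j$-th, where we place $\spun(e_0)$, $\spun(e_1)$, or $\spun(e_0+e_1)$ respectively. Hence each encoding matrix $A_{um+j}$ acts as the identity on every tensor factor except the $j$-th, where it acts as a $2\times 2$ matrix $B_{u,j}$ whose eigenspaces are the two of $\spun(e_0),\spun(e_1),\spun(e_0+e_1)$ dictated by Construction~\ref{construction 2}, with eigenvalues $\lambda_{j,0}$ and $\lambda_{j,1}$.

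I would then split into two cases according to whether $i_1=u_1m+j_1$ and $i_2=u_2m+j_2$ share the coordinate index. If $j_1\ne j_2$, the two matrices act on different tensor factors, so they commute and are simultaneously diagonalizable; every eigenvalue of $A_{i_1}-A_{i_2}$ has the form $\lambda_{j_1,p}-\lambda_{j_2,q}$, which is nonzero because the $2m$ scalars $\{\lambda_{j,p}\}_{j\in[m],p\in\{0,1\}}$ are chosen to be distinct. If $j_1=j_2=j$ and $u_1\ne u_2$, the difference factors as $(B_{u_1,j}-B_{u_2,j})\otimes I\otimes\cdots\otimes I$, and it suffices to check invertibility of each of the three possible $2\times 2$ differences $B_{u,j}-B_{u',j}$; writing out each $B_{u,j}$ explicitly in the basis $\{e_0,e_1\}$ from its eigenspace/eigenvalue data and taking pairwise differences yields the determinant $-(\lambda_{j,1}-\lambda_{j,0})^2$ in all three cases, which is nonzero.

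Together, these facts show that any choice of $2m$ distinct nonzero scalars $\{\lambda_{j,p}\}$ suffices for MDS, which is exactly what a field of size at least $2m+1$ provides. The main obstacle I anticipate is being clean about Case~A: one must verify the tensor-factor interpretation of Construction~\ref{construction 2} and be precise that the eigenvalue attached to a given eigenspace of $A_{um+j}$ only depends on the $j$-th tensor coordinate, so that simultaneous diagonalization across distinct $j_1,j_2$ gives the difference matrix as a diagonal matrix whose entries are exactly the claimed pairwise differences of $\lambda$'s.
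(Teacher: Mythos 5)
Your proposal is correct, and its top-level skeleton coincides with the paper's: reduce the MDS property to (a) invertibility of each $A_i$ and (b) invertibility of $A_x-A_y$ via the Schur complement of $\begin{pmatrix} I & I\\ A_x & A_y\end{pmatrix}$, then split on whether the two nodes share the digit index. Where you genuinely diverge is in how (b) is executed. The paper never makes the Kronecker structure explicit: for $j_1\neq j_2$ it asserts the decomposition $\mathbb{F}^l=\oplus_{s,t}(V_s\cap U_t)$ into common eigenspaces and deduces $\lambda_s=\mu_t$ from a nonzero kernel vector, and for $j_1=j_2$ it runs an abstract argument with a vector $a=b+c=b'+d$ split across the shared eigenspace $V$ and the two complementary eigenspaces $U,W$, forcing $\lambda=\mu$. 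Your tensor identification $\mathbb{F}^{2^m}\cong(\mathbb{F}^2)^{\otimes m}$ makes the first case transparent (the "easy to check" direct-sum claim becomes simultaneous diagonalizability of matrices acting on disjoint tensor factors) and collapses the second case to three explicit $2\times 2$ determinants, each equal to $-(\lambda_{j,1}-\lambda_{j,0})^2$; I verified this computation, and since the constraint that each shared eigenspace receive distinct eigenvalues pins down the assignment uniquely up to swapping $\lambda_{j,0}\leftrightarrow\lambda_{j,1}$, the three-case check is exhaustive. The trade-off: the paper's subspace argument is basis-free and foreshadows the general-$r$ situation, while your route is more concrete, makes the field-size count $2m+1$ visibly tight, and is the cleaner proof for $r=2$.
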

\begin{IEEEproof}
We will show that Construction \ref{construction 2} satisfies the MDS property, namely, any two erasures can be repaired.
This is equivalent to that (i) all the encoding matrices $A_x$'s are invertible, and (ii) any $2 \times 2$ block sub matrix
$$
\left[ \begin{array}{cc}
I & I \\
A_x & A_y
\end{array} \right]
$$
is invertible, for any distinct $x,y \in [k]$.
Since the eigenvalues are nonzero the first condition is satisfied.
The second condition is equivalent to that $A_x-A_y$ is invertible. Let $x=um+i,y=vm+j$, with $i,j \in [u], u,v \in \{0,1,2\}$.
\begin{itemize}
\item Case $i\neq j$: Let the eigenspaces of $A_{um+i},A_{vm+j}$ be  $V_1,V_2$ and $U_1,U_2$ respectively, which correspond to eigenvalues
    $\lambda_1,\lambda_2$, and $\mu_1,\mu_2$.
	Clearly $$V_1\oplus V_2=U_1\oplus U_2=\mathbb{F}^l.$$ It is easy to check that
	$$\oplus_{i,j=1}^2(V_i\cap U_j)=\mathbb{F}^l.$$ Assume to the contrary that there exists a non zero vector $a$ such that $$a(A_{um+i}-A_{vm+j})=0,$$
    where $a=\sum_{i,j=1}^2a_{i,j}, \text{ and } a_{i,j}\in V_i\cap U_j.$
	Then,
	 $$0=a(A_{um+i}-A_{vm+j})=(\lambda_1-\mu_1)a_{1,1}+(\lambda_1-\mu_2)a_{1,2}+(\lambda_2-\mu_1)a_{2,1}+(\lambda_2-\mu_2)a_{2,2}.$$
	 Since $a$ is non zero, at least one of the $a_{i,j}$'s is non zero. Hence, $\lambda_i=\mu_j$ and we get a contradiction since the eigenvalues of
    $A_{um+i}$, and $A_{vm+j}$ are distinct.
\item Case $i=j$ and $u\neq v$: Since $i=j$ the matrices $A_{um+i}$ and $A_{vm+i}$ share a common eigenspace from the set of subspaces $\{P_{i,u}, u \in \{0,1,2\}\}$. Denote by $V,U$ and $V,W$ the eigenspaces of $A_{um+i}$, $A_{vm+i}$. Denote by $\lambda, \mu$ the eigenvalues that correspond to the eigenspace of $V$ in the matrices $A_{um+i},A_{vm+i}$ respectively. By construction, $\lambda\neq \mu$, and therefore by construction $U$ is an eigenspace of $A_{um+i}$ with an eigenvalue $\mu$, and $W$ is an eigenspace of $A_{vm+i}$ with an eigenvalue $\lambda$.
    Assume that $aA_{um+i}=aA_{vm+i}$ for some non zero vector
    \begin{equation}
    a=b+c=b'+d,
    \label{blibli}
    \end{equation}
    where $b,b'\in V$, $c\in U$, and $d\in W.$ Then
	$$\lambda b+\mu c=(b+c)A_{um+i}=aA_{um+i}=aA_{vm+i}=(b'+d)A_{vm+i}=\mu b'+\lambda d,$$
	using \eqref{blibli} we conclude that $\mu=\lambda$ which is a contradiction.
\end{itemize}
\end{IEEEproof}

One can observe that the proposed code construction has parameters $(3m+2,3m,2^m)$, and a field size that scales linearly with the number of systematic nodes. On the other hand, the $(m+3,m+1,2^m)$ code in \cite{Tamo2} requires only a field of size $3$.
Thus, the proposed code can protect more systematic nodes, but has longer (actual) column length. The actual size of each column is longer since it has to store $2^m$ symbols of a \emph{larger} field. Nonetheless, it may be possible to alter the structure of the encoding matrices a bit (for example, relaxing the requirement that each of the encoding matrix is diagonalizable), and obtain a constant field size. This remains as a future research direction.

%
\section{Long Optimal-Update Code} \label{sec7}
%
%
In storage systems that use coding to combat failures, each parity symbol is a function of a subset of information symbols. Therefore, when an information symbol updates its value, also the  parity symbols that are function of it, need to be updated. Since update is one of the most frequent operation in the system, one would like to minimize the amount of symbols' update incurred by one information symbol update. In an MDS code each parity node is a function of the \emph{entire} information symbols, hence at least one parity symbol needs to be updated in any information symbol update. An optimal update MDS code attains this lower bound, namely each parity node updates exactly one of its symbols for each information symbol update. It is easy to see that in an optimal update \emph{linear} code, each encoding matrix is a generalized permutation matrix, i.e. there is exactly one nonzero entry in each row and each column.

In \cite{Tamo3} diagonal encoding matrices ,which are a special case of generalized permutation matrices, were considered.  They showed that an optimal bandwidth MDS code with $2$ parities, and diagonal encoding matrices, has at most $\log_2 l$ systematic nodes. In this section we will show that one can improve that by not restricting to diagonal encoding matrices. More precisely, we will construct on optimal update code with $2\log_2 l$ systematic nodes.

%
%
%

Let $l=2^m$ for some integer $m$, and define for any $i=1,...,m$ the following four subspaces of $\mathbb{F}^l$ of dimension $l/2$:
\begin{eqnarray*}
  P_i &=& \spun(e_a: a_{i}=0), \\
  R_i &=& \spun(e_a: a_{i}=1), \\
  Q_i &=& \spun(ye_a+x e_b: a_{i}=0, b_{i}=1,a_{j}=b_{j}, \forall j \neq i),\\
  O_i &=& \spun(-ye_a+xe_b:a_{i}=0, b_{i}=1, a_{j}=b_{j}, \forall j \neq i),
\end{eqnarray*}
where $x$ and $y$ are non zero elements of the field that satisfy  $x^2 \neq y^2.$ In the following we will also use letters $P,Q$ as superscripts for the encoding matrices.

\begin{cnstr} \label{cnstr3}
Construct the $(n=2m+2,k=2m,l=2^m)$ code over $\mathbb{F}$ by the following $2m$ encoding matrices $A_i^T,i=1,...,m$ and $T=P,Q$.
\begin{itemize}
	\item Define the matrix $A_i^P$ to have eigenspaces $Q_i, O_i$ that correspond to eigenvalues $xy, -xy$ respectively.
	\item Define the matrix $A_i^Q$ to have eigenspaces  $P_i, R_i$ that correspond to distinct non zero eigenvalues $\lambda,\mu$ respectively.
\end{itemize}
Moreover,  let the repairing subspace that correspond to the matrix $A_i^T$ be $S_i^T=T_i$.
\end{cnstr}

E.g., when  $m=1$, we get a $(4,2,2)$ with $2$ encoding matrices represented with respect to the standard basis
\begin{equation} 
  A_1^Q=\left[\begin{array}{cc}
\lambda & \\
  & \mu \end{array} \right],
  A_1^P \left[\begin{array}{cc}
 & x^2\\
y^2 &  \end{array} \right].
\end{equation}
and repairing subspaces
$$S_1^Q =Q_1= (y,x),  S_1^P=P_1= (1,0).
$$
When $m=2$, the encoding matrices are
$$
A_1^Q=\left[\begin{array}{cccc}
\lambda & & & \\
  & \lambda & &\\
  & & \mu & \\
  & & & \mu \end{array} \right],
A_2^Q=\left[\begin{array}{cccc}
\lambda & & & \\
  & \mu & &\\
  & & \lambda & \\
  & & & \mu \end{array} \right],
A_1^P=\left[\begin{array}{cccc}
 & & x^2 & \\
  &  & & x^2\\
y^2  & &  & \\
  & y^2 & &  \end{array} \right],
A_2^P=\left[\begin{array}{cccc}
 & x^2& & \\
 y^2 &  & &\\
  & &  & x^2\\
  & & y^2 &  \end{array} \right].
$$
The repairing subspaces are
$$
\left[\begin{array}{cccc}
y & & x & \\
  & y & & x \end{array} \right],
\left[\begin{array}{cccc}
y & x &  & \\
  &  & y & x \end{array} \right],
\left[\begin{array}{cccc}
1 & &  0&0 \\
  & 1 & 0&0  \end{array} \right],
\left[\begin{array}{cccc}
1 &0 &  & \\
  &  &1 & 0 \end{array} \right].
$$
In both cases it is not difficult to check that the subspace property is satisfied, hence the code has optimal bandwidth. And since the encoding matrices are permutation matrices, the code has optimal update.

\begin{thm}
  Construction \ref{cnstr3} has optimal bandwidth and optimal update.
\end{thm}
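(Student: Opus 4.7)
The plan is to verify both properties in turn: for optimal update, check that each encoding matrix is a generalized permutation matrix; for optimal bandwidth, verify the subspace property \eqref{eq24}--\eqref{eq25} for each repairing subspace $S_i^T \in \{P_i, Q_i\}$ against every encoding matrix. Throughout, I will write vectors as row vectors, so acting by an encoding matrix means right-multiplication.

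First I would work out the action of $A_i^P$ on the standard basis. Given a pair of indices $a,b$ with $a_i=0$, $b_i=1$ and $a_k=b_k$ for $k\neq i$, the two-dimensional subspace $\operatorname{span}(e_a,e_b)$ is $A_i^P$-invariant because it contains the eigenvectors $ye_a+xe_b\in Q_i$ (eigenvalue $xy$) and $-ye_a+xe_b\in O_i$ (eigenvalue $-xy$). Solving the $2\times 2$ linear system in this pair yields $e_aA_i^P=x^2e_b$ and $e_bA_i^P=y^2e_a$. Hence $A_i^P$ is a generalized permutation matrix (it swaps each pair up to a nonzero scalar), and $A_i^Q$ is diagonal by definition. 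This immediately gives optimal update.

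Next I would check cross-invariance $S_i^T A_j^U=S_i^T$ for every $(j,U)\neq (i,T)$. There are four cases. (i) $S_i^P=P_i$ under $A_j^Q$: $A_j^Q$ is diagonal in the standard basis, so it scales each $e_a$ with $a_i=0$, preserving $P_i$. (ii) $S_i^P=P_i$ under $A_j^P$ with $j\neq i$: by the formula above, $A_j^P$ sends $e_a\mapsto c\cdot e_{a'}$, where $a'$ differs from $a$ only in coordinate $j$; since $j\neq i$, we still have $a'_i=0$, so $P_i$ is preserved. (iii) $S_i^Q=Q_i$ under $A_j^Q$ with $j\neq i$: a generator $ye_a+xe_b\in Q_i$ satisfies $a_j=b_j$ (because $j\neq i$), so $e_a,e_b$ share a common eigenvalue of $A_j^Q$, and $ye_a+xe_b$ is itself an eigenvector. (iv) $S_i^Q=Q_i$ under $A_j^P$ with $j\neq i$: applying $A_j^P$ to $ye_a+xe_b$ maps it to a scalar multiple of $ye_{a'}+xe_{b'}$ where $a',b'$ are the corresponding pair with $j$-th coordinate flipped, and this vector is again in $Q_i$. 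Invertibility of each $A_j^U$ upgrades each inclusion to equality.

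Finally I would verify self-recovery $S_i^T+S_i^T A_i^T=\mathbb{F}^l$. For $T=P$: the formula $e_aA_i^P=x^2 e_b$ shows $P_iA_i^P=R_i$, and $P_i\oplus R_i=\mathbb{F}^l$. For $T=Q$: restrict to any pair span $\operatorname{span}(e_a,e_b)$. In this 2-dimensional subspace, $Q_i$ contributes $ye_a+xe_b$ and $Q_iA_i^Q$ contributes $\lambda y e_a+\mu x e_b$; these are linearly independent since the determinant $xy(\mu-\lambda)$ is nonzero by the assumptions $x,y\neq 0$ and $\lambda\neq\mu$. Summing over all $l/2$ pairs yields $\mathbb{F}^l$. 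The only mildly delicate step is this last Vandermonde-style argument in each pair — everything else follows directly from the eigenstructure of the construction.
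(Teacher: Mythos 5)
Your proof is correct and follows essentially the same route as the paper: the explicit computation $e_aA_i^P=x^2e_b$, $e_bA_i^P=y^2e_a$ gives optimal update and $P_iA_i^P=R_i$, the cross cases are handled by eigenspace/invariance arguments (the paper phrases case $i\neq j$ via the decomposition $Y_i=(Y_i\cap P_j)\oplus(Y_i\cap R_j)$, which is what your generator computation establishes), and your pairwise determinant argument fills in the self-recovery of $Q_i$ that the paper dismisses as ``the same reasoning.'' The one case you do not list is $S_i^Q=Q_i$ acted on by $A_i^P$, but it is immediate since $Q_i$ is by construction an eigenspace of $A_i^P$.
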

\begin{proof}
  It is easy to see that the encoding matrices are all permutation matrices, so the code has optimal update.
  We need to show the subspace property, namely
  for $i,j\in [m]$ and $Y,T\in \{P,Q\}$
  \begin{equation*}
S_i^YA_j^T\cap S_i^Y=\begin{cases}
\{0\} &i=j \text{ and } Y=T\\
S_i^Y & \text{ otherwise.}
\end{cases}
\end{equation*}

  \begin{itemize}
    \item Case $i \neq j$: One can check that for $Y \in \{Q,P\}$,
    $$Y_i=(Y_i\cap P_j)\oplus (Y_i\cap R_j)\text{ and } Y_i=(Y_i\cap Q_j)\oplus (Y_i\cap O_j).$$
    Therefore the  proof is the same as  in Theorem \ref{thm2}.
    \item Case $i=j,\text{ and } Y\neq T$: In this case $Y_i$ is an eigenspace of $A_i^T$, and the result follows.
    \item Case $i=j,T=Y$: Assume that $Y=P$, and we will show that the transformation $A_i^P$ maps the subspace $S_i^P=P_i$ to the subspace $R_i$, and since $P_i\cap R_i=\{0\}$ the result will follow.
       let $e_a \in P_i$  and $b$ be the integer that is identical to $a$ except on the $i$-th digit. Then
    \begin{eqnarray*}
      e_a A_i^P &=& \frac{1}{2y}[(ye_a + xe_b)-(-ye_a+xe_b)]A_i^P \\
      &=& \frac{xy}{2y} (ye_a + xe_b)- \frac{-xy}{2y}(-ye_a+xe_b) \\
      &=& x^2 e_b\in R_i.
    \end{eqnarray*}
    When $Y=Q$ the result follows by the same reasoning.
  \end{itemize}
\end{proof}

Similar to Theorem \ref{thm5} it is clear that the code can be MDS over a large enough finite field. To summarize the result of this section, we gave a construction that doubled the number of systematic nodes compared to the bound in \cite{Tamo3}. The reason for the violation of this bound is by not restricting to diagonal encoding matrices.
%
\section{Lowering the Access Ratio} \label{sec6}
%
%
Repairing a failed node is a computationally heavy task, that requires large amount of the system's resources. Therefore, optimizing the repair algorithm is of high importance. One way to optimize is by reducing the amount of symbols needed to be accessed and read during the repair process. This parameter is quantified by the \emph{access ratio} of the system. In this section we will use  explicit linear transformations performed on the code in Construction \ref{cnstr2} that yields to an equivalent code with a lower access ratio during a repair process.
Furthermore, these transformations maintain the other properties of the code, namely the MDS and the optimal repair properties.

Formally, given an $(n,k,l)$ code $\mathcal{C}$, let $\beta(i)$ denote the number of symbols (or entries) accessed in the surviving nodes during the repair of systematic node $i$. The \emph{access ratio} is defined as
$$R = \frac{\sum_{i=1}^{k} \beta(i)}{k(n-1)l}.$$
Note that $(n-1)l$ is the amount of surviving symbols in the system in the event of one node erasure, hence $R$ is the average fraction of the number of symbols in the system being accessed  during a repair process.
The $((r+1)m+r,(r+1)m,r^m)$ code in Construction \ref{cnstr2} has $(r+1)m$ systematic nodes, where $rm$ of them are repaired with optimal access, i.e., only $l/r$ symbols are accessed from each node during the repair process. Thus, repairing these nodes costs accessing $rm \cdot (n-1)l/r$ symbols. However, repairing any of the rest $m$ systematic nodes, one has to access \emph{all} the surviving symbols in the system. Notice that, although the repair is optimal, in order to generate the transmitted data one has to access the entire information in the node. Repairing these nodes costs accessing $m \cdot (n-1)l$ symbols, and the access ratio of the code is
\begin{equation}\label{eq34}
R=\frac{rm \cdot (n-1)l/r+m \cdot (n-1)l}{(r+1)m \cdot (n-1)l} = \frac{2}{r+1}.
\end{equation}
This value of the access ratio $R=2/(r+1)$ is our benchmark. We will show that with an appropriate selection of linear transformation, the value of access ratio $R$ can be reduced. But first we define how to apply linear transformation on the code to receive an equivalent code. Moreover  we will show that these linear transformations preserve the ``nice'' properties of our code.

Let $A=(A_{i,j})_{i\in [r],j\in [k]}$ be the encoding matrix of an $(k+r,k,l)$ optimal repair MDS code, with repairing subspaces $S_i,i=1,...,k.$
We will apply a linear transformation on the code by multiplying on the right the encoding matrix $A$  by a block diagonal matrix $B$, to get the encoding matrix $C$ as follows,
\begin{eqnarray*}
C=\left[ \begin{array}{ccc}
C_{2,1}  & \cdots & C_{2,k} \\
\vdots & \ddots & \vdots \\
C_{r,1} &\cdots & C_{r,k}
\end{array} \right]= AB
=
\left[ \begin{array}{ccc}
A_{2,1}  & \cdots & A_{2,k} \\
\vdots & \ddots &  \vdots\\
A_{r,1} &\cdots & A_{r,k}
\end{array} \right]
\left[
\begin{array}{ccc}
B_1 & & \\
& \ddots & \\
& & B_k
\end{array}
\right].
\end{eqnarray*}
Namely, for $i \in [r], j \in [k]$
\begin{equation} \label{eq33}
C_{i,j}=A_{i,j}B_{j},
\end{equation}
where $B_j$ is an invertible matrix of size $l \times l$.
After applying the linear transforation $B$ on the encoding matrix, the repairing subspaces should be changed accordingly. Recall that $S_{i,j}$ is the repairing subspace for surviving node $j$ during the repair of node $i$. Define the new repairing subspaces as follows:
\begin{equation} \label{eq32}
  S_{i,j}= \begin{cases}
  S_i B_j, & j \in [k],\\
  S_i, & j \in [k+1,k+r].
\end{cases}
\end{equation}
Notice that compared to the original code, the repairing subspaces are changed only for the systematic nodes.

\begin{thm}
Consider the linear transformation defined by \eqref{eq33}\eqref{eq32} applied on an optimal-bandwidth MDS code, then the resulting code is an optimal-bandwidth MDS code, with repairing subspaces $S_{i,j}$.
\end{thm}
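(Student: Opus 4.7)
The plan is to verify separately the two defining properties of the transformed code: the general subspace property \eqref{eq6}--\eqref{eq7}, which by the discussion in Section \ref{sec2} is equivalent to optimal repair bandwidth, and the MDS property. Both follow by direct manipulation from the corresponding properties of the original code, using only the fact that each block $B_j$ is invertible.

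For the subspace property, I fix a systematic node $i\in[k]$ to be repaired and a parity index $t\in[r]$, and check the two equations for the new subspaces $S_{i,j}^{\mathrm{new}}$ and new encoding matrices $C_{t,j}=A_{t,j}B_j$. For cancellation, take $j\neq i$ in $[k]$; then
$$
S_{i,k+t}^{\mathrm{new}} C_{t,j} \;=\; S_i\, A_{t,j}\, B_j \;=\; S_i\, B_j \;=\; S_{i,j}^{\mathrm{new}},
$$
where the middle equality uses the original subspace property $S_i A_{t,j}=S_i$ guaranteed by the simple repair strategy of the original code (equation \eqref{eq8}). For spanning,
$$
\sum_{t=1}^{r} S_{i,k+t}^{\mathrm{new}} C_{t,i} \;=\; \sum_{t=1}^{r} S_i\, A_{t,i}\, B_i \;=\; \Bigl(\sum_{t=1}^{r} S_i A_{t,i}\Bigr) B_i \;=\; \mathbb{F}^{l} B_i \;=\; \mathbb{F}^{l},
$$
by \eqref{eq9} and the invertibility of $B_i$. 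This establishes \eqref{eq6} and \eqref{eq7} for the new code.

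For the MDS property, I recall that a systematic linear array code with parity encoding blocks $(D_{t,j})_{t\in[r],j\in[k]}$ is MDS if and only if for every $e\le r$ and every pair of subsets $F\subseteq[r]$, $E\subseteq[k]$ with $|F|=|E|=e$, the $e\times e$ block submatrix $(D_{t,j})_{t\in F,j\in E}$ is invertible as an $el\times el$ matrix (this is the criterion that allows recovering any $e$ erased systematic columns from any $e$ surviving parity columns). For the transformed code, the analogous submatrix factors as
$$
(C_{t,j})_{t\in F,j\in E} \;=\; (A_{t,j})_{t\in F,j\in E}\cdot\mathrm{diag}\bigl(B_j\bigr)_{j\in E}.
$$
The left factor is invertible because the original code is MDS, and the right factor is block-diagonal with invertible blocks, so their product is invertible. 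Hence the transformed code is MDS.

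The argument is essentially bookkeeping once the two reductions above are in place; no step looks genuinely hard. The only mild subtlety is to notice that the repairing subspaces attached to parity nodes are deliberately \emph{not} transformed (see \eqref{eq32}), which is what makes the cancellation equation work with $C_{t,j}=A_{t,j}B_j$ rather than requiring some inverse of $B_j$ on the parity side; equivalently, the transformation $B$ acts only on the message and is absorbed on the parity side via the fixed encoding matrix. Once this is observed, both verifications reduce to one-line identities.
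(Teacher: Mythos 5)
Your proposal is correct and follows essentially the same route as the paper: the same one-line verifications of \eqref{eq6} and \eqref{eq7} using the original subspace property and the invertibility of $B_i$. The only difference is that you spell out the block-submatrix factorization for the MDS property, which the paper dismisses as ``easy to check''; your elaboration is accurate.
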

\begin{IEEEproof}
Since the code defined by the encoding matrix $A$ is optimal bandwidth, then by the subspace property \eqref{eq8}\eqref{eq9}  for any distinct $i, j \in [k]$, and $ t \in [r]$,
$$S_i = S_i A_{t,j}.$$
Therefore,
$$S_{i,j}=S_i B_j =  S_i A_{t,j}B_j=S_{i,k+t} C_{t,j}.$$ And
\eqref{eq6} is satisfied. Moreover, the sum of subspaces satisfies
$$\sum_{t=1}^{r}S_i A_{t,i}=\mathbb{F}^l,$$
therefore
$$\sum_{t=1}^rS_{i,k+t} C_{t,i}=\sum_{t=1}^rS_i A_{t,i}B_i=\mathbb{F}^l.$$
Therefore \eqref{eq7} is satisfied, and the equivalent code $C$ has optimal bandwidth.
It is easy to check that if $A$ is an MDS code, then also $C$, and the result follows.
\end{IEEEproof}

Now let us find a code such that the number of accesses will be decreased.
We say node $j$ has optimal access during the repair of node $i$, if only $l/r$ symbols are to be accessed in node $j$ during the repair on node $i$. This is equivalent to the following \textbf{optimal-access condition:} $S_{i,j}=S_i B_j$ can be written as a matrix with only $l/r$ non-zero columns.
So we need to look for proper $B_j$'s such that this condition is satisfied by as many pairs $(i,j)$ as possible.
Let $V_j$ be the matrix of the left eigenspaces of the encoding matrix $A_j$ in Construction \ref{cnstr2}, and we call it \emph{eigenspace matrix}. When $j=vm+y$, for $v \in [0,r], y \in [m]$, we have
$$V_j =\begin{pmatrix}
  P_{y,0} \\
  \vdots \\
  P_{y,v-1} \\
  P_{y,v+1} \\
  \vdots \\
  P_{y,r}
\end{pmatrix},  $$
where $P_{y,u'}$ are defined as in \eqref{eq29}. Here we view each $P_{y,u'}$ as $l/r$ of vectors instead of a subspace. For example, for the code in Figure \ref{fig2} if $j=1$ and consider standard basis $\{e_0,e_1,e_2,e_3\}$ then
$$V_1 = \begin{pmatrix}
  e_0 + e_2 \\
  e_1 + e_3 \\
  e_2 \\
  e_3 \\
\end{pmatrix}=\begin{pmatrix}
  1 & 0 & 1 & 0 \\
  0 & 1 & 0 & 1 \\
  0 & 0 & 1 & 0 \\
  0 & 0 & 0 & 1 \\
\end{pmatrix}.$$
Define the matrix of transformation as
\begin{equation} \label{eq20}
B_j = V_j ^{-1},
\end{equation}
which is the inverse of the eigenspace matrix.

\begin{thm}
The access ratio of the $(n=(r+1)m+r,k=(r+1)m,l=r^m)$ code using \eqref{eq20} is $$\frac{2}{r+1}-\frac{r-1}{(n-1)(r+1)}.$$
\end{thm}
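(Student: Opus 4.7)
The plan is to compute $\beta(i)$ for each systematic node $i=um+y$ (with $u\in[0,r]$ and $y\in[m]$) by counting the non-zero columns of the repairing-subspace matrix for each surviving node. From a parity node $k+t$ this matrix is the unchanged $S_i$, while from a surviving systematic node $j=vm+y'$ it is $S_{i,j}=S_iV_j^{-1}$ by \eqref{eq32} and \eqref{eq20}. The systematic nodes fall into two classes: those with $u<r$, for which $S_i=P_{y,u}$ is spanned by standard basis vectors and has $l/r$ non-zero columns, and those with $u=r$, for which $S_i=P_{y,r}$ has all $l$ columns non-zero.

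My central claim is that $S_iV_j^{-1}$ has $l/r$ non-zero columns for every distinct systematic pair $(i,j)$, except when $u=r$ and $y'\neq y$, in which case it has $l$. When $y=y'$ one has $u\neq v$ and $P_{y,u}$ is itself one of the eigenspaces $\{P_{y,u'}:u'\neq v\}$ of $A_j$; hence its basis appears as $l/r$ rows of $V_j$ and $S_iV_j^{-1}$ trivially has $l/r$ non-zero columns. When $y\neq y'$ and $v=r$, every eigenspace of $A_j$ is a standard subspace, so $V_j$ is a permutation matrix and $S_iV_j^{-1}$ inherits the non-zero column count of $S_i$.

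The hard subcase is $y\neq y'$ with $v<r$, where the non-standard eigenspace $P_{y',r}$ appears in $V_j$. I would decompose each basis vector $e_a\in P_{y,u}$ (so $a_y=u$) in the eigenbasis of $A_j$. For $u<r$: if $a_{y'}\neq v$ then $e_a$ is a single row of $V_j$ lying in $P_{y',a_{y'}}$; if $a_{y'}=v$ then a direct projection gives $e_a=-\sum_{u'\neq v,\,u'<r}e_{a_{y'}(u')}+\sum_{u''=0}^{r-1}e_{a_{y'}(u'')}$, which uses one row of $V_j$ from each of the $r$ eigenspaces. The crucial observation is that as $a$ varies over the basis of $P_{y,u}$, the standard rows $e_{a_{y'}(u')}$ appearing in the second subcase coincide with the rows already listed in the first subcase, so no new rows are introduced there; in total only $l/r^2$ rows from each of the $r$ eigenspaces of $A_j$ are used, summing to $l/r$. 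For $u=r$ and $y\neq y'$, the analogous calculation goes the other way: each basis vector $\sum_{u''}e_{a_y(u'')}$ of $P_{y,r}\cap P_{y',u'}$ requires $r$ different rows of $V_j$ inside $P_{y',u'}$, and contributions across different $a$'s are disjoint and saturate all $l/r$ rows available in each of the $r$ eigenspaces, forcing access to all $l$ symbols.

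Given these counts, $\beta(i)=(n-1)l/r$ when $u<r$. For $u=r$, the $r$ systematic survivors sharing the digit index $y$ contribute $l/r$ each, the $(m-1)(r+1)$ systematic survivors with $y'\neq y$ contribute $l$ each, and the $r$ parity survivors contribute $l$ each, yielding $\beta(i)=l+(m-1)(r+1)l+rl=m(r+1)l=kl$. Summing over the $rm$ nodes with $u<r$ and the $m$ nodes with $u=r$ gives $\sum_i\beta(i)=m(n-1)l+mkl=ml(2n-r-1)$, and dividing by $k(n-1)l$ produces the stated access ratio $\frac{2n-r-1}{(r+1)(n-1)}=\frac{2}{r+1}-\frac{r-1}{(r+1)(n-1)}$.
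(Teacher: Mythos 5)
Your proposal is correct and follows essentially the same route as the paper: the same three-case analysis of when $S_iV_j^{-1}$ has $l/r$ versus $l$ non-zero columns, followed by the same double count (you organize it by erased node, the paper by surviving node, but the totals agree). You actually supply more detail than the paper in the subcase $y\neq y'$, $v<r$ (the explicit change of basis into the eigenbasis of $A_j$) and in justifying that the exceptional case truly forces access to all $l$ symbols, which the paper merely asserts.
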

\begin{IEEEproof}
Suppose node $i=um+x$ is erased. From node $j=vm+y$, $j \neq i$, by \eqref{eq32} we need to send the following subspace:
$$S_{i,j} = S_{i} B_{j} =  S_i V_{j} ^{-1}.$$
Here $S_{i}$ is defined as $P_{x,u}$ as in Construction \ref{cnstr2}, and $B_{j}$ is defined in \eqref{eq20}. We are going to show that in a lot of cases $S_i$ can be rewritten as the product of a matrix $M$ and the eigenspace matrix $V_j$:
\begin{equation}\label{eq30}
S_i = M V_j,
\end{equation}
where $M$ is of size $l/r \times r$ and contains only $l/r$ non-zero columns. This will lead to $S_{i,j}=M V_j V_j^{-1} = M$ and therefore the code will have optimal access for the pair $i,j$.
\begin{itemize}
  \item Case $x=y$, $u \neq v$. Apparently, $S_i = P_{x,u}$ is one of the eigenspaces in $V_j$ and \eqref{eq30} is satisfied.
  \item Case $x \neq y$, $u \neq r$. We have observed in \eqref{eq:145} that the subspaces satisfy $P_{x,u}=\sum_{j=1}^r(P_{x,u}\cap T_j)$, where $T_1,\dots,T_r$ are all the eigenspaces of $A_j$. Moreover, each $P_{x,u}\cap T_j$ only contains linear combinations of $l/r^2$ vectors in $T_j$. Hence \eqref{eq30} holds.
  \item Case $x \neq y$, $u = r$. We need to access all remaining elements.
\end{itemize}
Recall the code length is $k=(r+1)m$. Hence for each systematic node $i$ as a survived node, it has optimal access for $r+(m-1)r = mr$ erased nodes (the first two cases), and accesses all elements for $m-1$ erased nodes (the last case). For each parity node as a survived node, it has optimal access for $rm$ erased nodes ($j \in [rm]$), and accesses all elements for $m$ erased nodes ($j \in [rm+1,(r+1)m]$), because the repairing subspaces are still $S_i$ for parity nodes.
Therefore, the access ratio is
$$\frac{k(rm\frac{l}{r}+(m-1)l)+r(rm\frac{l}{r}+ml)}{k(n-1)l}=\frac{2}{r+1}-\frac{r-1}{(n-1)(r+1)}. $$
Hence the proof is completed.
\end{IEEEproof}

We note here that this transformation lowers the access ratio compared to the original code \eqref{eq34}, but in the mean time increases the average updates for each systematic element. According to different system requirements, one can choose one code over another.

The transformation in this section provides a general method to trade updates for access. Given any optimal-bandwidth code, one can define such transformations and manipulate the encoding matrices  to lower the access ratio.

%
\section{Concluding Remarks} \label{sec5}
%
%
\begin{figure}
  \centering
  \includegraphics[width=.7\textwidth]{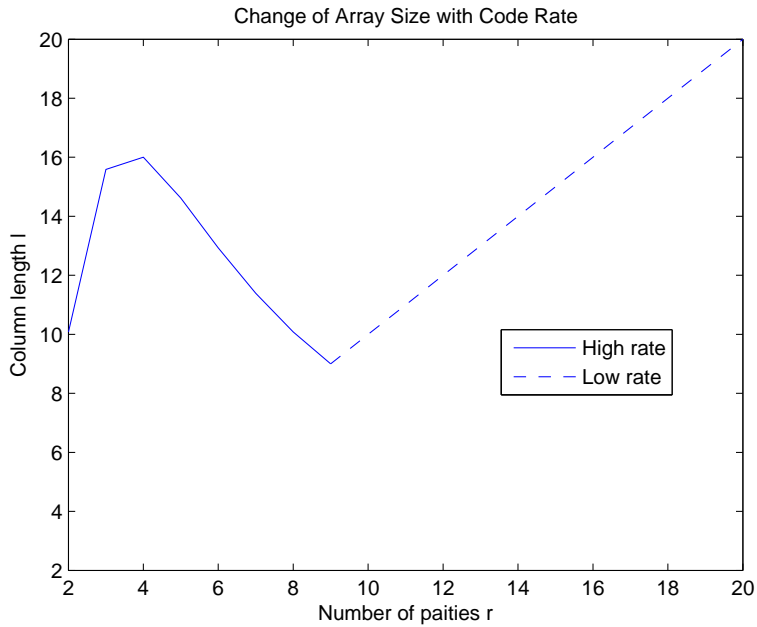}\\
  \caption{Change of array size with Code rate. $k=10$. For high code rate or $r \le 9$, the column length is shown in the solid line. For low code rate or $r \ge 9$, the column length is shown in the dashed line.}\label{fig5}
\end{figure}

In this paper, we presented a family of codes with parameters $(n=(r+1)m+r,k=(r+1)m,l=r^m)$ and they are the longest known high-rate MDS code with optimal repair. The codes were constructed using eigenspaces of the encoding matrices, such that they satisfy the subspace property. This property gives more insights on the structure of the codes, and simplifies the proof of optimal repair.

If we require that the code rate approaches $1$, i.e., $r$ being a constant and $m$ goes to infinity, then the column length $l$ is \emph{exponential} in the code length $k$. However, if we require the code rate to be roughly a constant fraction, i.e., $m$ being a constant and $r$ goes to infinity, then $l$ is \emph{polynomial} in $k$. Therefore, depending on the application and therefore the different codes rate, one can obtain different asymptotic characteristics of the code length.

For $n \ge 2k$ or $k \le r$ (low code rate), constructions in \cite{Suh-alignment,Rashmi} give the column length $l = r$. With some modifications, this column length is feasible for all $k \le r+1$. In our construction (high code rate), the column length is $l = r ^ {\frac{k}{r+1}}$. Fix the value of $k$, we can draw the graph of the column length with respect to the number of parities. Even though we need integer values for $k,r,l$, this graph still shows the trend of the code parameters. For example, this relationship is shown in Figure \ref{fig5} for $k=10$. These two regimes coincide when $r=k-1=9$. Actually, we can see that these two constructions are identical for $r=k-1$. Note that our construction only considers the repair of systematic nodes, so is only practical when $k >> r+1$. It is interesting to investigate the actual shape of this curve, and to understand for fixed code length $k$ how the column length $l$ changes with the number of parities $r$.

Besides, one possible application of the codes is hot/cold data. Since some of the nodes have lower access ratio than others if erased and hot data is more commonly requested, we can put the hot data in the low-access nodes, and cold data in the others.

At last, it is still an open problem what is the longest optimal-repair code one can build given the column length $l$. Also, the bound of the finite field size used for the codes may not be tight enough. Unlike the constructions in this paper, the field size may be reduced when we assume that the encoding matrices do not have eigenvalues or eigenvectors (are not diagonalizable).


\end{document}